\newcommand{\Ac}{\mathcal{A}}
\newcommand{\Bc}{\mathcal{B}}
\newcommand{\Rc}{\mathcal{R}}
\newcommand{\Uc}{\mathcal{U}}
\newcommand{\Xc}{\mathcal{X}}
\newcommand{\Yc}{\mathcal{Y}}
\newcommand{\aep}{{\mathcal{T}_{\epsilon}^{(n)}}}
\newcommand{\Xh}{{\hat{X}}}
\newcommand{\Yh}{{\hat{Y}}}
\newcommand{\Zh}{{\hat{Z}}}
\newcommand{\lh}{{\hat{l}}}
\newcommand{\xh}{{\hat{x}}}
\newcommand{\Xt}{{\tilde{X}}}
\newcommand{\Yt}{{\tilde{Y}}}
\def\d{\delta}
\def\e{\epsilon}
\DeclareMathOperator\E{\sf E}
\let\P\relax
\DeclareMathOperator\P{\sf P}
\newcommand{\Bern}{\mathrm{Bern}}
\def\textiid{i.i.d.\@\xspace}
\newcommand\iid{\ifmmode\text{ i.i.d. } \else \textiid \fi}
\newtheorem{theorem}{Theorem}
\newtheorem{proposition}{Proposition}
\newtheorem{corollary}{Corollary}
\title{
Multi-Terminal Source Coding With Action Dependent Side Information
}
\author{
\authorblockN{Yeow-Khiang Chia, Himanshu Asnani and Tsachy Weissman \\}
\authorblockA{Department of Electrical Engineering, Stanford University \\
Email:  ykchia@stanford.edu, asnani@stanford.edu, tsachy@stanford.edu}
}
\begin{document}
\maketitle
\begin{abstract}
We consider multi-terminal source coding with a single encoder and multiple 
decoders where either the encoder or the decoders can take cost constrained
actions which affect the quality of the side information present at the
decoders. For the scenario where decoders take actions, we characterize the
rate-cost trade-off region for lossless source coding, and give an achievability
scheme for lossy source coding for two decoders which is optimum for a variety
of special cases of interest. For the case where the encoder takes actions, we
characterize the rate-cost trade-off for a class of lossless source coding
scenarios with multiple decoders. Finally, we also consider extensions to other multi-terminal source coding settings with actions, and characterize the rate -distortion-cost tradeoff for a case of successive refinement with actions.  
\end{abstract}
\section{Introduction} \label{sect:1}
The problem of source coding with decoder side information (S.I.) was introduced
in~\cite{Wyner}.  
S.I. acts as an important resource in rate distortion problems, where it can
significantly reduce the compression rate required.  In classical shannon theory
and in work building on ~\cite{Wyner}, S.I. is assumed to be either always
present or absent. However, in practical systems as we know, acquisition of S.I.
is costly,  the encoder or decoder has to expend resources to aquire side
information.  With this motivation,  the framework for the problem of source
coding with action-dependent side information (S.I.) was introduced
in~\cite{Permuter2010a}, where the authors considered the cases where the
encoder or decoder are allowed to take actions (with cost constraints) that
affect the quality or availability of the side information present at the
decoders, and in some settings, the encoder. As noted in~\cite{Permuter2010a},
one motivation for this setup is the case where the side information is obtained
via a sensor through a sequence of noisy measurements of the source sequence.
The sensor may have limited resources, such as acquisition time or power, in
obtaining the side information. This is therefore modeled by the cost constraint
on the action sequence to be taken at the decoder. Additional motivation for
considering this framework is given in~\cite{Permuter2010a}. We also refer
readers to recent work in ~\cite{AsnaniPermuterWeissman2010a},
\cite{Kittichokechai2010} for related Shannon theoretic scenarios invoking the
action framework.


In this paper, we extend the source coding with action framework to the 
case where there are multiple decoders, which can take actions that affect the
quality or availability of S.I. at each decoder, or where the encoder takes
actions that affect the quality or availability of S.I. at the decoders. As a
motivation for this framework, consider the following problem: An encoder
observes an i.i.d source sequence $X^n$ which it wishes to describe to two
decoders via a common rate limited link of rate $R$. The decoders, in addition
to observing the output of the common rate limited link, also have access to a
common sensor which gives side information $Y$ that is correlated with $X$.
However, because of contention or resource constraints, when decoder 1 observes
the side information, decoder 2 cannot access the side information and vice
versa. This problem is depicted in Figure~\ref{fig1}. Even in the absence of
cost constraints on the cost of switching to $1$ or $2$, this problem is interesting and
non-trivial. How should the decoders share the side information and what is the
optimum sequence of actions be conveyed and then taken by the decoder?
\begin{figure}
\begin{center}
\psfrag{y}{$Y_i$}
\psfrag{a1i}{$A_i(M)$}
\psfrag{a2i}{$A_i(M)$}
\psfrag{d1}[c]{Dec 1}
\psfrag{d2}[c]{Dec 2}
\psfrag{0}{$1$}
\psfrag{1}{$2$}
\psfrag{yh}{$X^n$}
\psfrag{xh}{$X^n$}
\psfrag{r}[c]{$M \in [1:2^{nR}]$}
\psfrag{Enc}[c]{Enc.}
\psfrag{X}[c]{$X^n$}
\includegraphics[width = 0.65\textwidth, height = 0.25\textwidth]{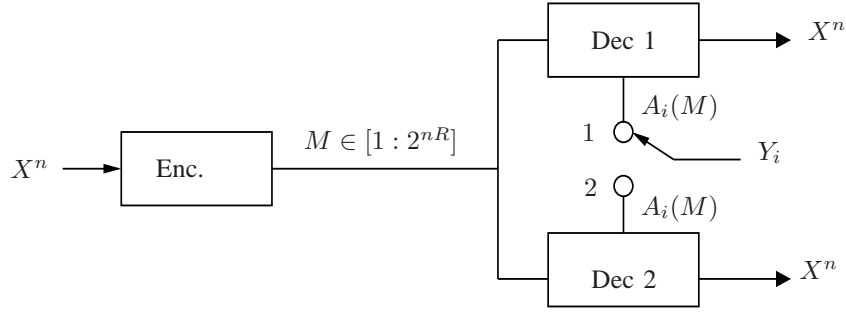} 
\end{center}
\caption{Lossless source coding with switching dependent side information. When the switch is at position $1$, decoder 1 observes the side information. When the switch is at position $2$, decoder 2 observes the side information.} \label{fig1}
\vspace{-18pt}
\end{figure} 

By posing the above problem in the framework of source coding with action
dependent side information, we solve it for the (near) lossless source coding
case, a special case of lossy source coding with switching dependent side
information, and give interpretations of the standard random binning and coding
arguments when specialized to this switching problem. As one example for the
implications of our findings, when $Y = X$,
we show that the optimum rate required for lossless source coding in the above
problem is $H(X)/2$ - clearly a lower bound on the required rate, but that it
suffices for perfect reconstruction of the source simultaneously at both
decoders is, at first glance, surprising. We devote a significant portion of
this paper to the setting where the side information at the decoders is obtained
through a switch  that determines which of the two decoders gets to observe the
side information, and obtain a complete characterization of the fundamental
performance limits in various scenarios involving such switching. The achieving
schemes in these scenarios are interesting in their own right, and also provide
insight into more general cases.

The rest of the paper is organized as follows. In section~\ref{sect:2}, we
provide 
formal definitions and problem formulations for the cases considered. In
section~\ref{sect:3}, we first consider the setting of lossless source coding
with decoders taking actions with cost constraints and give the optimum
rate-cost trade-off region for this setting. Next, we consider the setting of
lossy source coding decoders taking actions with cost constraints and give a
general achievability scheme for this setup. We then specialize our
achievability scheme to obtain the optimum rate-distortion and cost trade-off
region for a number of special cases. In section~\ref{sect:5}, we consider the
setting where actions are taken by the encoder. The rate-cost-distortion
tradeoff setting is open even for the single decoder case. Hence, we only
consider a special case of lossless source coding for which we can characterize
the rate-cost tradeoff. In section \ref{sect:6}, we extend our setup to two
other multiple users settings, including the case of successive refinement with
actions.  The paper is concluded in section \ref{sect:7}.
\section{Problem Definition} \label{sect:2}
In this section, we give formal definitions for, and focus on, the case 
where there are two decoders. Generalization of the definitions to $K$ decoders
is straightforward, and, as we indicate in subsequent sections, some of our
results hold in the $K$ decoders setting. We follow the notation
of~\cite{El-Gamal--Kim2010}. We use $A$ to denote the action random variable.
The distortion measure between sequences is defined in the usual way. Let $d:
\Xc \times \mathcal{\Xh} \rightarrow [0, \infty)$. Then, $d(x^n, \xh^n) :=
\frac{1}{n} \sum_{i=1}^n  d(x_i, \xh_i)$. The cost constraint is also defined in
the usual fashion: let $\Lambda(A^n):= \frac{1}{n}\sum_{i=1}^n \Lambda(A_i)$.
Throughout this paper, sources $(X^n, Y^n)$ are specified by the joint
distribution $p(x^n,y^n) = \prod_{i=1}^n p_{X,Y}(x_i,y_i)$ (i.i.d.). The
decoders obtain side information through a discrete memoryless \textit{action
channel} $\P_{Y_1, Y_2|X,A}$ specified by conditional distribution $p(y_1^n,
y_2^n|x^n, a^n) = \prod_{i=1}^n p_{Y_1, Y_2|X,A}(y_{1i}, y_{2i}|x_i, a_i)$, with
decoder $j$ obtaining side information $Y_j^n$ for $j\in\{1,2\}$. Extensions to
more than two sources or more than two channel outputs for multiple decoders are
straightforward. 
\subsection{Source coding with actions taken at the decoders}
This setting for two decoders is shown in figure~\ref{fig:2}. A $(n, 2^{nR})$ code for the above setting consists of one encoder
\begin{align*}
f &: \Xc^n \rightarrow M \in [1:2^{nR}],
\end{align*}
one \textit{joint action encoder at all decoders}
\begin{align*}
f_{A-Dec.} &: M \in [1:2^{nR}] \rightarrow \mathcal{A}^n,
\end{align*}
and two decoders
\begin{align*}
g_1 &: \Yc^n_1 \times [1:2^{nR}]\rightarrow \hat{\mathcal{X}}_1^n,\\
g_2 &: \Yc^n_2 \times [1:2^{nR}]\rightarrow \hat{\mathcal{X}}_2^n,
\end{align*}
Given a distortion-cost tuple $(D_1, D_2, C)$, a rate $R$ is said to be achievable if, for any $\e >0$ and $n$ sufficiently large, there exists $(n, 2^{nR})$ code such that{\allowdisplaybreaks
\begin{align*}
\E\left[ \frac{1}{n}\sum_{i=1}^n d_j(X_i, \Xh_{j,i})\right] &\le D_j + \e, \quad \mbox{j=1,2}, \\
\E\left[ \frac{1}{n}\sum_{i=1}^n \Lambda(A_i)\right] &\le C + \e.
\end{align*}}
The \textit{rate-distortion-cost region}, $\Rc(D_1, D_2, C)$, is defined as the infimum of all achievable rates.
\begin{figure}
\begin{center}
\psfrag{X}{$X^n$}
\psfrag{Enc}[c]{Enc.}
\psfrag{ya1}{$Y_1^n$}
\psfrag{ya2}{$Y_2^n$}
\psfrag{d1}[c]{Dec 1}
\psfrag{d2}[c]{Dec 2}
\psfrag{act}[c]{$P_{Y_1, Y_2|X,A}$}
\psfrag{yh}{$\Xh_1^n$}
\psfrag{xh}{$\Xh_2^n$}
\psfrag{A}[c]{$A^n(M)$}
\psfrag{r}[c]{$M \in [1:2^{nR}]$}
\includegraphics[width = 0.75\textwidth, height = 0.3\textwidth]{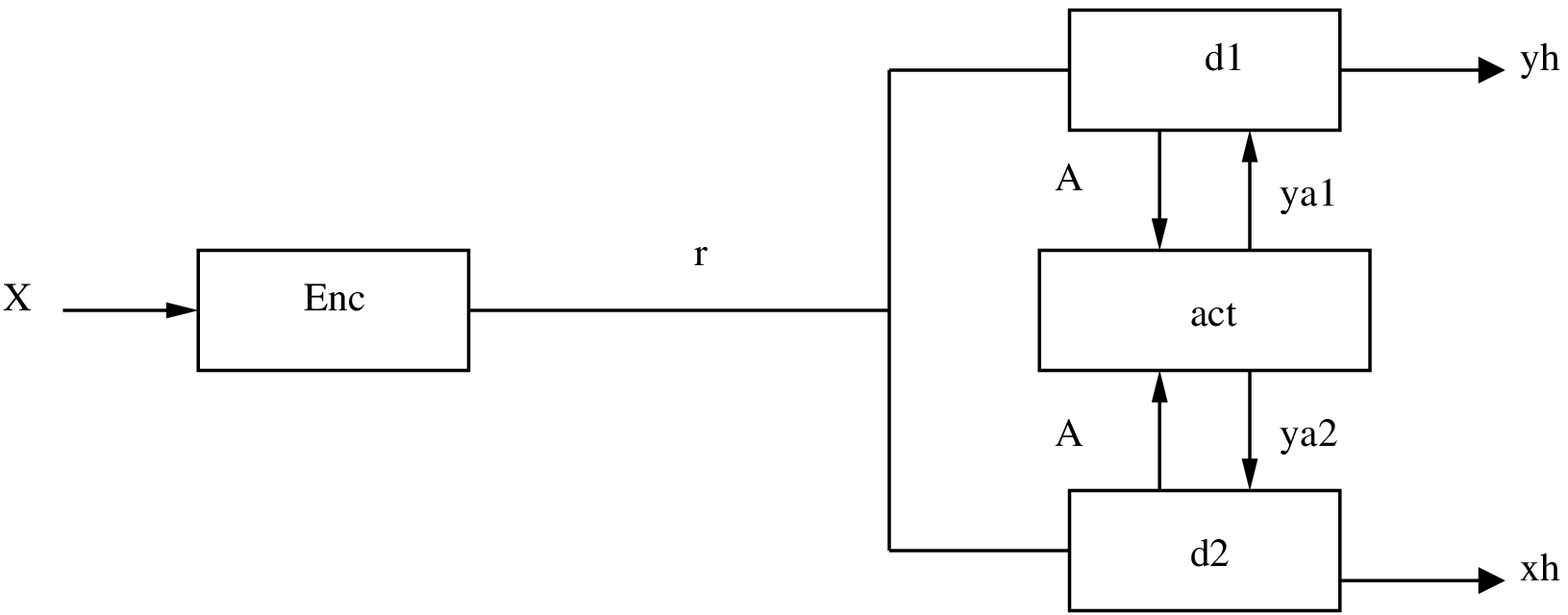} 
\end{center}
\caption{Lossy source coding with actions at the decoders.} \label{fig:2}
\vspace{-5pt}
\end{figure} 
\begin{figure}
\begin{center}
\psfrag{X}{$X^n$}
\psfrag{Enc}[c]{Enc.}
\psfrag{ya1}{$Y_1^n$}
\psfrag{ya2}{$Y_2^n$}
\psfrag{d1}[c]{Dec 1}
\psfrag{d2}[c]{Dec 2}
\psfrag{act}[c]{$P_{Y_1, Y_2|X,A}$}
\psfrag{yh}{$\Xh_1^n$}
\psfrag{xh}{$\Xh_2^n$}
\psfrag{Ax}[c]{$A^n(X^n)$}
\psfrag{r}[c]{$M \in [1:2^{nR}]$}
\includegraphics[width = 0.75\textwidth, height = 0.3\textwidth]{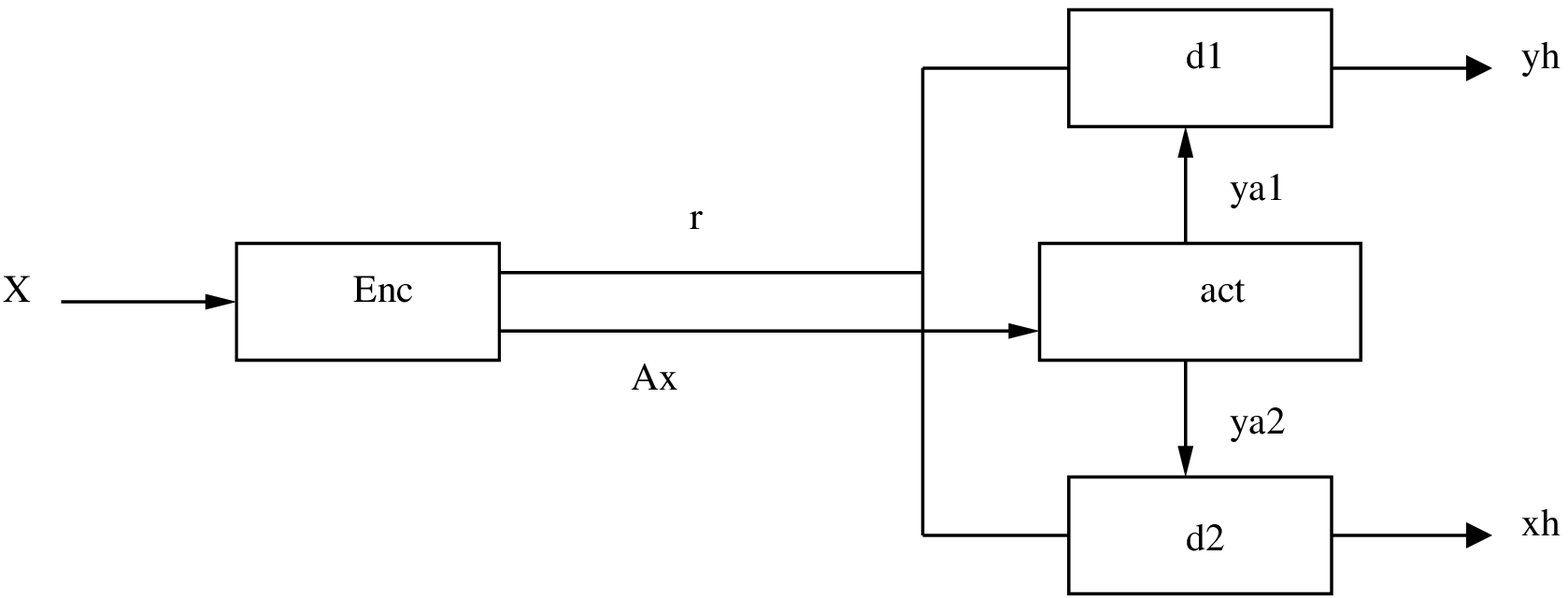} 
\end{center}
\caption{Lossy source coding with actions at the encoder.} \label{fig:3}
\vspace{-18pt}
\end{figure} 

\textit{Causal reconstruction with action dependent side information}: Some results in this paper involves the case of \textit{causal reconstruction}. In the case of causal reconstruction, the decoder reconstructs $\Xh_i$ based only on the received message $M$ and the side information up to time $i$. That is, 
\begin{align*}
g_{j,i} &: \Yc^i_j \times [1:2^{nR}]\rightarrow \hat{\mathcal{X}}_{j,i},
\end{align*} 
for $j \in \{1,2\}$ and $i \in [1:n]$.

\textit{Remark 2.1:} The case of the decoders taking separate actions $A_1$ and
$A_2$ respectively is a special case of our setup since we can write $A:= (A_1,
A_2)$. 

\textit{Remark 2.2:} For the reconstruction mappings, we excluded the action
sequence as an input since $A^n$ is a function of the other input $M$. In our
(information) rate expressions, we will see the appearance of $A$ in the
expressions. As we will see in the next subsection, an advantage of this
definition is that it carries over to the case when the encoder takes actions
rather than the decoders.
\subsection{Source coding with action taken at the encoder}
This setting is shown in figure~\ref{fig:3}. As the definitions and problem 
statement for this case are similar to the first setting, we will only mention
the differences between the two settings. The main difference is that the
encoder takes actions rather than the decoders. Therefore, in the definition of
a code, we replace the case of a joint action encoder at the
decoders with the \textit{encoder taking actions} given by the function
\begin{align*}
f_{A-Enc} &: \Xc^n \rightarrow \mathcal{A}^n.
\end{align*}
As in the setting of actions taken at the decoder, here too we assume that the side information observed by the decoders 
is not available at the encoder. In subsequent sections we also describe the
results pertaining to the case where side information is available at the
encoder. 

\textit{Remark 2.3: Lossless source coding - } Some of our results concern the
case of 
lossless source coding. In the case of lossless source coding, the definitions
are similar, except that the distortion constraints $D_1, D_2$ are replaced by
the block probability of error constraint: $\P(\{\Xh_1^n \neq X^n\} \cup
\{\Xh_2^n \neq X^n\}) \le \e$. 

\section{Lossless source coding with actions at the decoders} \label{sect:3}
In this section and the next, we consider the case of source coding with
actions 
taken at the decoders. We first present results for the lossless source coding
setting. While the lossless case can be taken to be a special case of lossy
source coding, we present them separately, as we are able to obtain stronger
results for more general scenarios in the lossless setting, and give several
interesting examples that arise from this setup. The case of lossy source coding
for two decoders is presented in section \ref{sect:3.2}.

For the lossless case, we first state the result for the general case of $K$
decoders. Our result is stated in Theorem~\ref{thm:1}.
\begin{theorem} \label{thm:1}
Let the action channel be given by the conditional distribution $P_{Y_1, Y_2, \ldots, Y_K|X,A}$ with decoder $j$ observing the side information $Y_{j}$. Then, the minimum rate required for lossless source coding with actions taken at the decoders and cost constraint $C$ is given by
\begin{align*}
R &= \min \max_{j \in [1:K]} \left \{H(X|Y_j, A) \right\} + I(X;A),
\end{align*}
where $\min$ is taken over the distributions $p(x)p(a|x)p(y_1, y_2,
\ldots, y_K|x,a)$ such that $\E \Lambda(A) \le C$.
\end{theorem}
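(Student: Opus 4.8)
The plan is to prove the theorem in two matching halves, both organized around a two-layer view of the message: a first layer that describes a single common action sequence to all $K$ decoders, and a Slepian--Wolf layer that lets each decoder recover $X^n$ exactly once it has taken the (shared) action.

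For achievability I would fix a conditional $p(a|x)$ meeting the cost constraint $\E\Lambda(A)\le C$ and split the rate as $R=R_a+R_b$ with $R_a=I(X;A)+\delta$ and $R_b=\max_{j}H(X|Y_j,A)+\delta$. First generate an action codebook of $2^{nR_a}$ sequences i.i.d.\ $\sim p_A$; given $x^n$ the encoder finds, by the covering lemma, an index $l$ with $(x^n,a^n(l))\in\aep$, which succeeds w.h.p.\ since $R_a>I(X;A)$. Independently bin the source sequences uniformly into $2^{nR_b}$ bins, and set $M=(l,\mathrm{bin}(x^n))$. The common action $a^n(l)=f_{A\text{-}Dec.}(M)$ is taken by every decoder; since the channel is memoryless and $(x^n,a^n(l))$ is jointly typical, the conditional typicality lemma gives $(x^n,a^n(l),y_j^n)\in\aep$ w.h.p.\ at each decoder $j$. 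Decoder $j$, knowing $a^n(l)$, $y_j^n$, and the bin index, searches its bin for the unique jointly typical $\xh^n$, which succeeds provided $R_b>H(X|Y_j,A)$; a union bound over the $K$ decoders kills the total error because $R_b$ exceeds the worst case $\max_j H(X|Y_j,A)$. Minimizing over feasible $p(a|x)$ gives the claimed rate as an upper bound.

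For the converse I would fix a decoder $j$ and run essentially the single-decoder argument, using Fano's inequality $H(X^n|M,Y_j^n)\le n\epsilon_n$ and the fact that $A^n$ is a deterministic function of $M$. Starting from
\begin{align*}
nR\ge H(M)\ge I(X^n;M)=I(X^n;A^n)+I(X^n;M|A^n),
\end{align*}
I would bound $I(X^n;M|A^n)=H(X^n|A^n)-H(X^n|M)$ via $H(X^n|M)\le n\epsilon_n+I(X^n;Y_j^n|M)$ and $I(X^n;Y_j^n|M)\le I(X^n;Y_j^n|A^n)$ (the latter because, given $X^n$, the channel output depends on $M$ only through $A^n$). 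The $I(X^n;A^n)$ and $H(X^n|A^n)$ terms collapse to the constant $H(X^n)=nH(X)$, leaving
\begin{align*}
nR\ge nH(X)-I(X^n;Y_j^n|A^n)-n\epsilon_n\ge\sum_{i=1}^n\bigl[I(X_i;A_i)+H(X_i|A_i,Y_{j,i})\bigr]-n\epsilon_n,
\end{align*}
where memorylessness gives $I(X^n;Y_j^n|A^n)\le\sum_i I(X_i;Y_{j,i}|A_i)$. Introducing a time-sharing variable $Q\sim\U[1:n]$ and setting $X=X_Q$, $Y_j=Y_{j,Q}$, $A=(A_Q,Q)$ converts this to $R\ge I(X;A)+H(X|Y_j,A)-\epsilon_n$ with $\E\Lambda(A)\le C+\epsilon$.

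The main obstacle, and the step that produces the $\max_j$ with a \emph{single} common action, is that this per-decoder inequality must hold for the \emph{same} single-letter $A$ simultaneously for every $j$. This is exactly what the problem structure buys: because $A^n$ is one common function of $M$ shared by all decoders, the induced auxiliary $A=(A_Q,Q)$ is identical across $j$, so $R\ge I(X;A)+H(X|Y_j,A)$ holds for all $j$ at once, whence $R\ge I(X;A)+\max_j H(X|Y_j,A)$, which is at least the minimum over feasible distributions. Two points require care: the direction of the single-letter bound on $I(X^n;Y_j^n|A^n)$ (conditioning and memorylessness must be combined so the inequality points the right way), and the absorption of $Q$ into the action $A$ --- the standard device ensuring the single-letter optimum coincides with what time-sharing over action distributions achieves, without which the naive single-letter minimum would be loose when the two decoders are ``complementary.''
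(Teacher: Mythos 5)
Your achievability argument and the $n$-letter portion of your converse are correct and essentially identical to the paper's: action covering at rate $I(X;A)+\delta$ plus uniform binning of all source sequences at rate $\max_j H(X|Y_j,A)+\delta$, and, for the converse, Fano plus the function property $A^n=f(M)$ plus channel memorylessness, giving $nR\ge nH(X)-I(X^n;Y_j^n|A^n)-n\epsilon_n$ and $I(X^n;Y_j^n|A^n)\le\sum_i I(X_i;Y_{ji}|A_i)$. The gap is in your final single-letterization step, and your closing paragraph shows it is conceptual rather than cosmetic. You set $A=(A_Q,Q)$. This variable does not live on the action alphabet of the given channel $P_{Y_1,\ldots,Y_K|X,A}$, so the joint law you exhibit is not of the form $p(x)p(a|x)p(y_1,\ldots,y_K|x,a)$ over which the theorem minimizes; what you have proved is that $R$ is at least a minimum over an enlarged, ``time-shared'' class of actions. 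You then assert this dominates the theorem's minimum, while simultaneously claiming that without absorbing $Q$ into $A$ ``the naive single-letter minimum would be loose'' for complementary decoders. These two claims pull against each other: if time sharing could strictly lower the expression, your converse would only reach the enlarged-class minimum, which would then be strictly below the right-hand side of the theorem --- and since the theorem's achievability attains the un-augmented minimum, the theorem itself would be false. So, as written, the converse does not close, and the justification offered for the last step is backwards.

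The fix is exactly what the paper does: keep $A=A_Q$ on the original alphabet and group terms so that removing $Q$ points the right way. Note that you cannot drop $Q$ term by term in your decomposition, since dropping it decreases $I(X_Q;A_Q|Q)$ but increases $H(X_Q|A_Q,Y_{jQ},Q)$. Instead use the identity $I(X_i;A_i)+H(X_i|A_i,Y_{ji})=H(X_i)+H(Y_{ji}|X_i,A_i)-H(Y_{ji}|A_i)$. Then $\sum_i H(X_i)=nH(X)$; memorylessness gives the Markov chain $Q-(X_Q,A_Q)-Y_{jQ}$, so $\sum_i H(Y_{ji}|X_i,A_i)=nH(Y_{jQ}|X_Q,A_Q,Q)=nH(Y_{jQ}|X_Q,A_Q)$ exactly; and $\sum_i H(Y_{ji}|A_i)=nH(Y_{jQ}|A_Q,Q)\le nH(Y_{jQ}|A_Q)$, which enters with a minus sign, so the inequality weakens the bound in the admissible direction. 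This yields $R\ge I(X;A_Q)+H(X|Y_j,A_Q)-\epsilon_n$ simultaneously for all $j$ with the bona fide single-letter action $A_Q$ satisfying $\E\Lambda(A_Q)\le C+\epsilon$, and hence $R\ge I(X;A_Q)+\max_j H(X|Y_j,A_Q)-\epsilon_n$, which is at least the stated minimum. (Alternatively you may keep $A'=(A_Q,Q)$ but must then supply the missing lemma: writing $I(X;A')+\max_j H(X|Y_j,A')=H(X)-\min_j I(X;Y_j|A')$ and using $I(X;Y_j|A_Q,Q)\le I(X;Y_j|A_Q)$, the augmented expression dominates the un-augmented one, i.e., time sharing never helps in this problem --- the opposite of what your final paragraph asserts.)
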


\subsection*{Achievability}
As the achievability techniques used are fairly standard (cf.
\cite{El-Gamal--Kim2010}), we give only a sketch of achievability.

\noindent \textit{Codebook Generation:}
\begin{itemize}
\item Generate $2^{n(I(X;A) + \e)}$ $A^n$ sequences according to $\prod_{i=1}^n p(a_i)$.
\item Bin the set of all $X^n$ sequences into $2^{n(\max_{j \in [1:K]} \left \{H(X|Y_j, A) \right\} + \e)}$ bins, $\Bc(m_b)$, $m_b \in [1:2^{n(\max_{j \in [1:K]} \left \{H(X|Y_j, A) \right\} + \e)}]$.
\end{itemize}

\noindent \textit{Encoding:}
\begin{itemize}
\item Given a source sequence $x^n$, the encoder looks for an index $M_A \in [1: 2^{n(I(X;A) + \e)}]$ such that $(x^n, a^n(M_A)) \in \aep$. If there is none, it outputs an uniform random index from $[1: 2^{n(I(X;A) + \e)}]$. If there is more than one such index, it selects an index uniformly at random from the set of feasible indices. From the covering lemma \cite[Chapter 3]{El-Gamal--Kim2010}, the probability of error for this step goes to $0$ as $n \to \infty$ since there are $2^{n(I(X;A) + \e)}$ $A^n$ sequences. 
\item The encoder also looks the index $m_b \in [1:2^{n(\max_{j \in [1:K]} \left \{H(X|Y_j, A) \right\} + \e)}]$ such that $x^n \in \Bc(m_b)$.
\item It then sends the indices $m_b$ and $M_A$ to the decoders via the common link. This step requires a rate of $R = \max_{j \in [1:K]} \left \{H(X|Y_j, A) \right\} + I(X;A) + 2\e$.
\end{itemize}

\noindent \textit{Decoding:}
\begin{itemize}
\item The decoders take the joint action $a^n(M_A)$ and obtain their side informations $Y_j$ for $j \in [1:K]$.
\item Decoder $j$ then looks for the \textit{unique} $X^n$ sequence in bin $\Bc(m_b)$ such that $(X^n, Y^n_j, a^n(M_A)) \in \aep$. An error is declared if there is none more than one $x^n$ sequence satisfying the decoding condition. The probability of error for this step goes to $0$ as $n \to \infty$ from the strong law of large numbers and the fact that $|\Bc| > 2^{n(\max_{j \in [1:K]} \left \{H(X|Y_j, A) )\right\}}$.
\end{itemize}

\subsection*{Converse}

Given a $(n, 2^{nR}, C)$ code, consider the rate constraint for decoder $j$. We have {\allowdisplaybreaks
\begin{align*}
& nR \ge H(M) \\
& = I(M;X^n) \\
& \stackrel{(a)}{=} I(A^n;X^n) + I(M;X^n|A^n) \\
& \stackrel{(b)}{\ge} I(A^n;X^n) + H(M|A^n, Y_j^n) - H(M|A^n, X^n, Y_j^n) \\
& = H(X^n) - H(X^n|A^n) + I(M;X^n |A^n, Y_j^n) \\
& \stackrel{(c)}{\ge} H(X^n) - H(X^n|A^n) + H(X^n |A^n, Y_j^n) -n\e_n\\
& = H(X^n) - H(X^n|A^n) + H(X^n  |A^n) \\ &\qquad + H(Y_j^n|X^n, A^n) - H(Y_j^n|A^n)-n\e_n \\
& \stackrel{(d)}{\ge} \sum_{i=1}^n H(X_i) + H(Y_{ji}|X_i, A_i) - H(Y_{ji}|A_i)-n\e_n. 
\end{align*}}
$(a)$ follows from $A^n$ being a function of $M$; $(b)$ follows from the Markov chain $M \to (X^n, A^n) \to Y_{j}^n$; $(c)$ follows from the assumption of lossless source coding; $(d)$ follows from conditioning reduces entropy and the fact that the action channel is a discrete memoryless channel (DMC). Define $Q$ as the standard time sharing random variable. Observe that $H(X_Q|Q) = H(X_Q) = H(X)$, $H(Y_{jQ}|A_Q, X_Q, Q) = H(Y_{jQ}|A_Q, X_Q) = H(Y_j|A,X)$ and $H(Y_{jQ}|A_Q, Q) \le H(Y_j|A)$. Hence, we can write the lower bound as
\begin{align*}
nR &\ge n (H(X) + H(Y_j|X,A) - H(Y_j|A) - \e_n) \\
& = n(I(X;A) + H(X|Y_j,A) - \e_n). 
\end{align*}
Taking the intersection of all lower bounds for all $K$ decoders then give us the rate expression given in the Theorem. Finally, the cost constraint on the action follows from $C \ge \E\frac{1}{n}\sum_{i=1}^n \Lambda(A_i) = \E \Lambda(A)$.

We now specialize the result in Theorem~\ref{thm:1} to the case of source coding with switching dependent side information mentioned in the introduction. We consider the more general setting involving $K$ decoders.

\begin{corollary} \label{coro:1}
\textit{Source coding with switching dependent side information and no cost
constraints.} 
Let $(X,Y)$ be jointly distributed according to $p(x,y)$. Let  $\mathcal{A} =
[1:K]$ and $P_{Y_1, Y_2, \ldots, Y_K|X,A}$ be defined by $Y_j = Y$ when $A =j$
and $e$ otherwise for $j \in [1:K]$. Let $\Lambda(A):=0$ for all $a \in
\mathcal{A}$. Then, the minimum rate is given by 
\begin{align*}
 H(X|Y) + \frac{K-1}{K}I(X;Y).
\end{align*} 
\end{corollary}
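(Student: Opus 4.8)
The plan is to invoke Theorem~\ref{thm:1} and evaluate the resulting optimization for the specific switching channel. Since $\Lambda(A)\equiv 0$, the cost constraint is vacuous, the source $p(x)$ and the channel $p(y_1,\dots,y_K|x,a)$ are fixed, and the only free object is $p(a|x)$, so the minimum rate is
\begin{align*}
R = \min_{p(a|x)} \left[ \max_{j \in [1:K]} H(X|Y_j, A) + I(X;A) \right].
\end{align*}
First I would unpack $H(X|Y_j,A)$ for the switching channel. Writing $q_a := p(A=a)$ and conditioning on the value of $A$, the decoder observing $Y_j$ learns $Y_j=Y$ exactly when $A=j$, while $Y_j=e$ is a deterministic erasure symbol carrying no information about $X$ whenever $A\neq j$. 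Hence
\begin{align*}
H(X|Y_j, A) = q_j\, H(X|Y, A=j) + \sum_{a \neq j} q_a\, H(X|A=a) = H(X|A) - q_j\, I(X;Y|A=j),
\end{align*}
where the second equality just adds and subtracts $q_j H(X|A=j)$ and uses $I(X;Y|A=j)=H(X|A=j)-H(X|Y,A=j)\ge 0$.

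The crucial step is then to telescope the whole objective. Since maximizing over $j$ removes the smallest correction term,
\begin{align*}
\max_{j} H(X|Y_j, A) = H(X|A) - \min_{j} \left[ q_j\, I(X;Y|A=j) \right],
\end{align*}
and adding $I(X;A)$ together with the identity $H(X|A)+I(X;A)=H(X)$ collapses everything to
\begin{align*}
\max_j H(X|Y_j,A) + I(X;A) = H(X) - \min_j \left[ q_j\, I(X;Y|A=j) \right].
\end{align*}
Thus minimizing the rate is equivalent to \emph{maximizing} $\min_j [\,q_j\, I(X;Y|A=j)\,]$ over $p(a|x)$.

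For the converse (the lower bound $R\ge H(X|Y)+\frac{K-1}{K}I(X;Y)$) I would bound the minimum by the average, $\min_j [q_j I(X;Y|A=j)] \le \frac{1}{K}\sum_j q_j I(X;Y|A=j) = \frac{1}{K} I(X;Y|A)$, and then invoke the Markov chain $A - X - Y$ (valid because $A$ is drawn from $X$ alone through $p(a|x)$ while $Y$ is generated from $X$ through $p(y|x)$, so $Y\perp A\mid X$) to obtain $I(X;Y|A)=H(Y|A)-H(Y|X)\le H(Y)-H(Y|X)=I(X;Y)$. Combining gives $\min_j[q_j I(X;Y|A=j)]\le \frac{1}{K}I(X;Y)$ for every feasible $p(a|x)$, hence $R\ge H(X)-\frac{1}{K}I(X;Y)=H(X|Y)+\frac{K-1}{K}I(X;Y)$. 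For achievability I would exhibit the symmetric choice: $A$ uniform on $[1:K]$ and independent of $X$, so that $q_j=1/K$, $I(X;A)=0$, and $I(X;Y|A=j)=I(X;Y)$ for each $j$; then the min term equals exactly $\frac{1}{K}I(X;Y)$ and the bound is met with equality.

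The main obstacle is the algebraic reduction to $H(X)-\min_j[q_j I(X;Y|A=j)]$ and, in particular, confirming that letting $A$ depend on $X$ can never lower the rate. A data-dependent action seems to ``buy'' the term $I(X;A)$, but this is cancelled precisely by the corresponding reduction in $H(X|A)$, and the only residual gain from a correlated action, namely $I(X;Y|A)$, is capped at $I(X;Y)$ by the Markov structure. Everything remaining is the elementary min-versus-average comparison, which is immediate.
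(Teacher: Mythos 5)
Your proof is correct and follows essentially the same route as the paper's: both evaluate Theorem~\ref{thm:1} for the switching channel (your $q_j\,I(X;Y|A=j)$ is exactly the paper's $I(X;Y_j|A)$), both bound the best decoder term by the average over $j$ together with the Markov chain $A-X-Y$ to get $\le \frac{1}{K}I(X;Y)$, and both achieve the bound with $A$ uniform and independent of $X$. The only difference is presentational: you first collapse the objective to $H(X)-\min_j[q_j I(X;Y|A=j)]$ and then compare min to average, whereas the paper averages the $K$ per-decoder lower bounds directly.
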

\begin{proof}

Proof of Corollary \ref{coro:1} amounts to an explicit characterization of the distribution of $p(a|x)$ in Theorem~\ref{thm:1}. For each $j \in [1:K]$, we have, from Theorem \ref{thm:1},
\begin{align}
R &\ge H(X|Y_j, A) + I(X;A) \nonumber \\
& = H(X|Y) + I(X;Y) - I(X;Y_j|A). \label{eqn:1}
\end{align} 
Consider now the sum
\begin{align}
\sum_{j=1}^K I(X;Y_j|A) & \stackrel{(a)}{=} \sum_{a \in \mathcal{A}} p(a) I(X;Y|A = a) \nonumber \\
& = H(Y|A) - H(Y|X,A) \nonumber \\
& \stackrel{(b)}{\le} H(Y) - H(Y|X) \nonumber \\
& = I(X;Y). \label{eqn:2}
\end{align}
$(a)$ follows from the fact that $Y_j = e$ for $a \neq j$ and $Y_j = Y$ for $a = j$. $(b)$ follows from the Markov Chain $A-X-Y$. 

Next, summing over the $K$ lower bounds in \eqref{eqn:1}, we obtain
\begin{align*}
R &\ge \frac{1}{K}(KH(X|Y) + KI(X;Y) - \sum_{j=1}^K I(X;Y_j|A)) \\
& \ge H(X|Y) + I(X;Y) - \frac{1}{K} I(X;Y) \\
& = H(X|Y) + \frac{K-1}{K} I(X;Y),
\end{align*}
where we used inequality \eqref{eqn:2} in the second last step. 
Finally, noting that this lower bound on the achievable rate can be obtained from Theorem \ref{thm:1} by setting $A \bot X$ and $p(a = j) = 1/K$ completes the proof of Corollary \ref{coro:1}. 
\end{proof}

\textit{Remark 3.1:} The action can be set to a fixed sequence independent of the source sequence. This is perhaps not surprising since there is no cost on the actions.

\textit{Remark 3.2:} For $K = 2$ and $X = Y$, which is the example given in the introduction, we have $R = H(X)/2$.

\textit{Remark 3.3:} For this class of channels, the achievability scheme in Theorem \ref{thm:1} has a simple and interesting ``modulo-sum'' interpretation. We present a sketch of an alternative scheme for this class of switching channels for $K = 2$. It is straightforward to extend the achievability scheme given below to $K$ decoders.

\textit{Alternative achievability scheme }

Split the $X^n$ sequence into 2 equal parts; $X_{1}^{n/2}$ and $X_{n/2+1}^n$ and select the fixed action sequence of letting decoder 1 observe $Y_1^{n/2}$ and decoder 2 observe $Y_{n/2+1}^n$. Separately compress each part using standard random binning with side information to obtain $M_1 \in [1:2^{n(H(X|Y)/2 + \e)}]$ and $M_2\in [1:2^{n(H(X|Y)/2 + \e)}]$ corresponding to the first and second half respectively. Within each bin, with high probability, there are only $2^{nI(X;Y)/2}$ typical $X^{n/2}$ sequences and we represent each of them with an index $M_{j1} \in [1:2^{n(I(X;Y)/2 + \e)}]$, where $j \in \{1,2\}$. Send out the indexes $M_1$ and $M_2$, which requires a rate of $H(X|Y) + 2\e$. Next, send out the index $M_{11} \oplus M_{21}$ which requires a rate of $I(X;Y)/2 + \e$. From $M_1$ and side information $Y_{1}^{n/2}$, decoder 1 can recover $X_{1}^{n/2}$ with high probability. Therefore, it can recover $M_{11}$ with high probability. Hence, it can recover $M_{21}$ from $M_{11} \oplus M_{21}$ and therefore, recover the $X_{\frac{n}{2}+1}^n$ sequence. The same analysis holds for decoder 2 with the indices interchanged. 

Corollary~\ref{coro:2} gives the characterization of the achievable rate for a general switching dependent side information setup with cost constraint on the actions for two decoders.

\begin{corollary} \label{coro:2}
\textit{General switching depedent side information for 2 decoders.} Define the action channel as follows: $A \in \{0,1,2,3\}$; $A = 0, Y_1 = e, Y_2 = e$; $A = 1, Y_1 = Y, Y_2 = e$; $A = 2, Y_1 = e, Y_2 = Y$; and $A = 3, Y_1 = Y, Y_2 = Y$. Let $\Lambda(A = j) = C_j$ for $j \in [0:3]$. Then, the optimum rate-cost trade-off for this class of channel is given by
\begin{align*}
&R \ge I(X;A) + \max\left\{ H(X|Y_1,A), H(X|Y_2,A)\right\} \\
&= I(X;A) + p_0H(X|A = 0) + \sum_{j=1}^3 p_j H(X|Y, A = j) \\
& \qquad + \max\{p_1 I(X;Y| A = 1), p_2 I(X;Y|A = 2)\},
\end{align*}
for some $p(a|x)$, where $\P\{A = j\} = p_j$, satisfying $\sum_{j=0}^3 p_jC_j \le C$.
\end{corollary}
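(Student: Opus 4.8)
The plan is to obtain this formula as a direct specialization of Theorem~\ref{thm:1} to the case $K=2$ with the given four-state action channel, so the entire task reduces to computing the two conditional entropies $H(X|Y_1,A)$ and $H(X|Y_2,A)$ explicitly. Applying Theorem~\ref{thm:1} with $K=2$ immediately gives
\begin{align*}
R &= \min \left[ I(X;A) + \max\{H(X|Y_1,A), H(X|Y_2,A)\} \right],
\end{align*}
where the minimization is over $p(a|x)$ satisfying $\sum_{j=0}^3 p_j C_j \le C$ (this is just $\E\Lambda(A)\le C$ written out, since $\P\{A=j\}=p_j$ and $\Lambda(A=j)=C_j$). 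So it remains only to simplify the $\max$ term; both achievability and converse are inherited wholesale from Theorem~\ref{thm:1}, and no new coding or converse argument is needed.

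First I would expand each conditional entropy by conditioning on the four values of $A$, i.e.\ $H(X|Y_j,A) = \sum_{a=0}^3 p_a H(X|Y_j, A=a)$. The key observation is that, by the definition of the channel, for each fixed $a$ the variable $Y_j$ is either the constant symbol $e$ or equal to $Y$: in the former case $H(X|Y_j,A{=}a) = H(X|A{=}a)$, and in the latter case $H(X|Y_j,A{=}a) = H(X|Y,A{=}a)$. Reading off the switching table, decoder~$1$ sees $Y$ only when $A\in\{1,3\}$, so
\begin{align*}
H(X|Y_1,A) = p_0 H(X|A{=}0) + p_1 H(X|Y,A{=}1) + p_2 H(X|A{=}2) + p_3 H(X|Y,A{=}3),
\end{align*}
and symmetrically decoder~$2$ sees $Y$ only when $A\in\{2,3\}$, giving the same expression with the roles of the indices $1$ and $2$ interchanged in the two middle terms.

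Next I would convert the ``no side information'' terms into a common base using the identity $H(X|A{=}a) = H(X|Y,A{=}a) + I(X;Y|A{=}a)$. Applying this to the $A{=}2$ term in $H(X|Y_1,A)$ and to the $A{=}1$ term in $H(X|Y_2,A)$ shows that both expressions equal the common quantity $p_0 H(X|A{=}0) + \sum_{j=1}^3 p_j H(X|Y,A{=}j)$ plus a single residual mutual-information term: $p_2 I(X;Y|A{=}2)$ for decoder~$1$ and $p_1 I(X;Y|A{=}1)$ for decoder~$2$. Since the common base is identical in the two cases, the maximum factors out cleanly as this base plus $\max\{p_1 I(X;Y|A{=}1), p_2 I(X;Y|A{=}2)\}$, which together with the $I(X;A)$ term is exactly the claimed formula.

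This argument is entirely bookkeeping, and I do not anticipate any genuine obstacle; the only point requiring care is correctly reading the switching table to determine, for each action value, which decoder's observation collapses to $H(X|Y,A{=}a)$ rather than $H(X|A{=}a)$. The slight subtlety worth flagging is that the asymmetry between decoders survives only through the single pair of terms $p_1 I(X;Y|A{=}1)$ and $p_2 I(X;Y|A{=}2)$, since the $A{=}0$ state exposes $Y$ to neither decoder and the $A{=}3$ state exposes it to both, contributing only to the symmetric common base.
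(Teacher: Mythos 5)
Your proposal is correct and follows essentially the same route as the paper: the paper offers no separate proof of Corollary~\ref{coro:2}, treating it as a direct specialization of Theorem~\ref{thm:1}, and your explicit bookkeeping (expanding $H(X|Y_j,A)$ over the four action values and using $H(X|A{=}a) = H(X|Y,A{=}a) + I(X;Y|A{=}a)$ to extract the common base plus the residuals $p_2 I(X;Y|A{=}2)$ and $p_1 I(X;Y|A{=}1)$) correctly supplies the algebra the paper leaves implicit. The only content in the paper beyond this is Remark~3.4, which sketches an \emph{alternative} achievability scheme with a modulo-sum interpretation; it is supplementary and not needed for the corollary itself.
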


\textit{Remark 3.4:} This setup again has a ``modulo-sum interpretation'' for the term $\max\{p_1 I(X;Y|A = 1), p_2 I(X;Y| A = 2)\}$ and the rate can also be achieved by extending the achievability scheme described in Corollary~\ref{coro:1}. The scheme involves partitioning the $X^n$ sequence according to the value of $A_i$ for $i \in [1:n]$. Following the scheme in Corollary \ref{coro:1}, we let $M_j\in [1:2^{n (p_j H(X|Y, A=j) + \e)}]$ for $j \in [0:3]$. We first generate a set of $A^n$ codewords according to $\prod_{i=1}^n p(a_i)$. Next, for each $A^n$ codeword, define $A^{n_j}$ to be $\{A_i: A_i =j\}$. Similarly, let $X^{n_j} := \{X_i: A_i =j, i \in [1:n]\}$ be the set of possible $X$ sequences corresponding to $A^{n_j}$. We bin the set of all $X^{n_j}$ sequences to $2^{n(p_j H(X|Y, A=j) + \e)}$ bins, $\Bc_j(M_j)$. For $j \in \{1,2\}$, further bin the set of $x^{n_j}$ sequences into $2^{n(p_j I(X;Y| A = j) + \e)}$ bins, $\Bc_{j1}(M_{j1})$, $M_{j1} \in [1:2^{n(p_j I(X;Y| A = j) + \e)}]$. 

For encoding, given an $x^n$ sequence, the encoder first finds an $A^n$ sequence that is jointly typical with $x^n$. It sends out the index corresponding to the $A^n$ sequence found. Next, it splits the $x^n$ sequence into four partial sequences, $x^{n_j}$, for $j \in [0:3]$, where $x^{n_j}$ is the set of $x_i$ corresponding to $A_i = j$. It then finds the corresponding bin indices such that $x^{n_j} \in \Bc_j(I_j)$ for $j \in [0:3]$. It then sends out the indices $M_0, M_1, M_2, M_3$ and $M_{11} \oplus M_{21}$. 

For decoding, we mention only the scheme employed by the first decoder, since the scheme is the same in for decoder 2. From the properties of jointly typical sequences and standard analysis for Slepian-Wolf lossless source coding~\cite{Cover1975}, it is not difficult to see that decoder 1 can recover $x^{n_0}, x^{n_1}, x^{n_3}$ with high probability. Recovery of $x^{n_1}$ also allows decoder 1 to recover the index $M_{11}$ and hence, $M_{21}$ from $M_{11} \oplus M_{21}$. Noting that the rate of $M_{21}$ and $M_{2}$ sums up to $p_2 H(X|A=2) + 2\e$, it is then easy to see that decoder 1 can recover $x^{n_2}$ with high probability.    

In corollary \ref{coro:1}, we showed that, for the case of switching dependent side information, the action sequence is independent of the source $X^n$ when cost constraint on the actions is absent. A natural question to ask is whether the action is still independent of $X^n$ when a cost constraint on the actions is present? The following example shows that the optimum action sequence is in general dependent on $X^n$.

\textit{Example 1: Action is dependent on source statistics when cost constraint is present.} 
Let $K = 2$ and $(X, Y)$ be distributed according to an $S$ channel, with $X \sim \Bern(1/2)$, $\P(Y=1|X=1) = 1$ and $\P(Y=0|X=0) = 0.2$. Let $A \in \{1,2\}$ with $Y_1 = Y$ if $A = 1$ and $Y_2 = Y$ if $A=2$. Let $\P(A = 1) = p_1$, $\P(X = 0|A = 1) = 1/2 + \d_1$ and $\P(X = 0|A = 2) = 1/2 - \d_2$. Figure \ref{fig:ex} shows the probability distributions between the random variables. 

\begin{figure}[!h]
\begin{center}
\psfrag{x}[c]{$X$}
\psfrag{y}[c]{$Y$}
\psfrag{a}[c]{$A$}
\psfrag{0}[c]{$0$}
\psfrag{1}[c]{$1$}
\psfrag{2}[c]{$2$}
\psfrag{pxy}[c]{$0.2$}
\psfrag{pax1}[c]{$\frac{1}{2} + \d_1$}
\psfrag{pax2}[c]{$\frac{1}{2} + \d_2$}
{\includegraphics{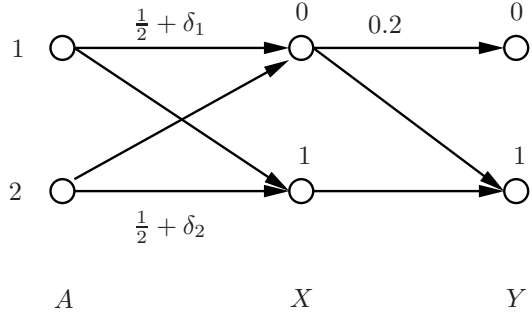}} 
\end{center}
\caption{Probability distributions for random variables used in example 1} \label{fig:ex}
\vspace{-5pt}
\end{figure}
 
Since $X\sim \Bern(1/2)$, $d_1$ and $d_2$ are related by $\d_2 = p_1\d_1/(1-p_1)$. Therefore, we set $\d_1 = \d$ and $\kappa = p_1/(1-p_1)$ for this example. 

Now, let $\Lambda(A = 1) = 1$ and $\Lambda(A = 2) = 0$ and $C = 0.4$. The optimum rate-cost tradeoff in this case may be obtained from Corollary \ref{coro:2} by setting $C_0 = C_3 = \infty$, $C_1 = 1$ and $C_2 = 0$, giving us
\begin{align*}
R &= I(X;A)+ p_1 H(X|Y, A = 1) + (1-p_1) H(X|Y,A=2) \\
&\qquad + \max\{p_1 I(X;Y| A = 1), (1-p_1)I(X;Y|A = 2)\},
\end{align*}
for some $p(a|x)$, where $\P\{A = 1\} = p_1$, satisfying $p_1 \le 0.4$. The problem of finding the optimum action sequence to take then reduces (after some straightforward algebra) to the following optimization problem:
\begin{align*}
&\min_{p_1, \d}  \; 1 - p_1 H_2(0.5 - \d) - (1- p_1)H_2(0.5 - \kappa\d)  \\
&   \qquad + p_1 H(X|Y, A = 1) + (1-p_1) H(X|Y,A=2) \\
&  \qquad + \max\{p_1 I(X;Y| A = 1), (1-p_1)I(X;Y|A = 2)\},  \\
&\mbox{subject to } \\
&  \qquad 0 \le p_1 \le 0.4, \\
&  \qquad -0.5 \le \d \le 0.5,
\end{align*}
where
\begin{align*}
H(X|A=1) &= p_1H_2(0.5-\d), \\
H(X|A=2) &= (1-p_1)H_2(0.5 - \kappa\d), \\
H(X|Y, A = 1) & = ((0.5+\d)(0.8) + 0.5-\d)H_2\left(\frac{0.5-\d}{(0.5+\d)(0.8) + (0.5-\d)}\right), \\
H(X|Y, A = 2) &= ((0.5-\kappa\d)(0.8) + (0.5+ \kappa\d))H_2\left(\frac{0.5+\kappa\d}{(0.5-\kappa\d)(0.8) + (0.5+\kappa\d)}\right),
\end{align*}
and $H_2(.)$ is the binary entropy function.

While exact solution to this (non-convex) optimization problem involves searching over $p_1$ and $\d$, it is easy to see that if $A$ is restricted to be independent of $X$, which corresponds to restricting $\d$ to be equal to $0$, then the optimum solution for $p_1$ is $0.4$. Under $p_1 =0.4$ and $\d = 0$, we obtain $R_{A\bot X} = 0.9568$. In contrast, setting $p_1 =0.4$ and $\d = -0.05$, we obtain $R = 0.9554$, which shows that the optimum action sequence is in general dependent on the source $X$ when cost constraints are present.

An explanation for this observation is as follows. The cost constraint forces decoder 1 to see less of the side information $Y$ than decoder 2. It may therefore make sense to bias the distribution $X|A=1$ so that $Y$ conveys more information about the source sequence $X$, even at the expense of describing the action sequence to the decoders. Roughly speaking, the amount of information conveyed about $X$ by $Y$ may be measured by $I(X;Y)$. Note that under $\d = 0$, $I(X;Y|A=1) = 0.108$, whereas under $\d = -0.05$, $I(X;Y|A=1) = 0.1116$. A plot of the optimum rate versus cost tradeoff obtained by searching over a grid of $p_1$ and $\d$ is shown in Figure \ref{fig:ratevcost}. The figure also shows the rate obtained if actions were forced to be independent of the source sequence.

\begin{figure}[!h]
\begin{center}
\scalebox{0.7}{\includegraphics{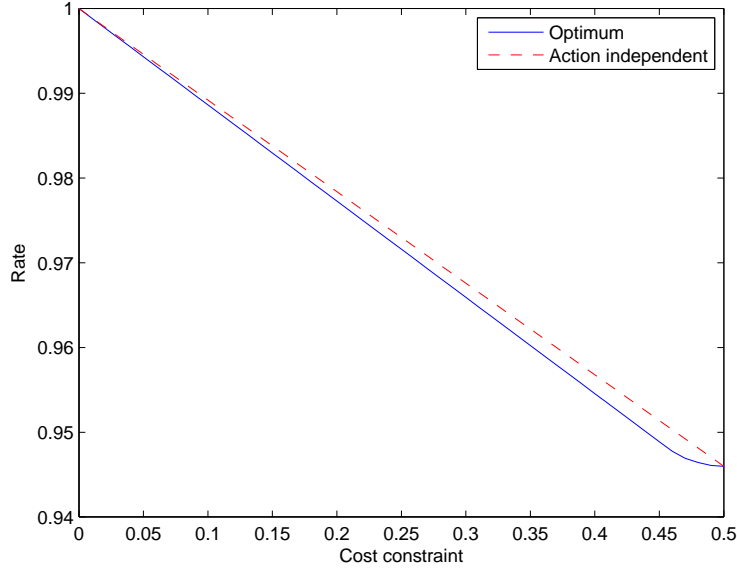}}
\caption{Rate versus cost constraint for the example 1. It is easy to show operationally that the optimum rate versus cost curve is convex in the cost constraint. When the cost constraint approaches zero, the rate approaches 1, since this case corresponds to decoder 1 not seeing any of the side information. When the cost constraint approaches 0.5, the rate approaches the minimum rate without cost constraint. The red dashed line shows the rate that would be obtained if actions were forced to be independent of the source. As can be seen on graph, forcing actions to be independent of the source is in general not optimum when cost constraint is present. The optimum rate versus cost constraint plot appears to be linear over a range of cost constraints. It can be shown that if the cost constraint is below a threshold, then the optimum rate is a linear function of the cost constraint. However, the plot obtained via numerical simulation appears to be linear in the cost constraint over a wider range than what we obtained by analysis. Performing a more refined analysis to obtain a cost constraint threshold that matches the cost threshold obtained by simulation appears to be difficult, due to the nature of the optimization problem that is involved.} \label{fig:ratevcost}
\end{center}
\end{figure}


\section{Lossy source coding with action at the decoders} \label{sect:3.2}
In this section, we first consider the case when causal reconstruction is required, and give the general rate-distortion-cost region for $K$ decoders. Next, we consider the case of lossy noncausal reconstruction for two decoders and give a general achievability scheme for this case. We then show that our achievability scheme is optimum for several special cases. Finally, we discuss some connections between our setting and the complementary delivery setting introduced in~\cite{Kimura2006}.

\subsection{Causal reconstruction for $K$ decoders}

\begin{theorem} \label{thm:3}
\textit{Causal lossy reconstruction for $K$ decoders}

When the decoders are restricted to causal reconstruction \cite{Weissman--El-Gamal06}, $\mathcal{R}(D_1, D_2,\ldots, D_K, C)$ is given by
\begin{align*}
R = I(U;X)
\end{align*} 
for some $p(u|x)$, $A = f(U)$ and reconstruction functions $\xh_j$ for $j \in [1:K]$ such that
\begin{align*}
\E d_j(X, \xh_j(U, Y_j)) &\le D_j \mbox{ for } j \in [1:K] \\
\E \Lambda(A) &\le C.
\end{align*}
The cardinality of $U$ is upper bounded by $|\Uc| \le |\Xc||\Ac| + K$.
\end{theorem}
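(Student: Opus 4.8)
The plan is to establish achievability and the converse separately, and then bound the cardinality of $\Uc$ via the support lemma. For achievability I would generate a codebook of $2^{n(I(U;X)+\e)}$ sequences $U^n$ drawn i.i.d.\ from $\prod_i p(u_i)$. Given $x^n$, the encoder finds an index $M$ whose codeword $U^n(M)$ is jointly typical with $x^n$; by the covering lemma this succeeds with high probability since the rate exceeds $I(U;X)$. The joint action encoder sets $A^n = f(U^n(M))$ symbol-by-symbol, the decoders recover $U^n(M)$ from $M$, take the action $A^n$, and observe their side information $Y_j^n$ through the memoryless action channel. Each decoder reconstructs causally and symbol-by-symbol via $\Xh_{j,i} = \xh_j(U_i(M), Y_{j,i})$, which is a legitimate causal rule since it uses only $Y_{j,i}$, a component of $Y_j^i$. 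Because $(X^n, U^n(M))$ is jointly typical and the channel is memoryless, the induced per-letter statistics are close to $p(x)p(u|x)\ind\{a=f(u)\}p(y_j|x,a)$, so the average distortions and the cost concentrate around $\E d_j(X,\xh_j(U,Y_j))$ and $\E\Lambda(A)$, giving achievability.

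For the converse, start from $nR \ge H(M) \ge I(M;X^n) = \sum_{i=1}^n I(M;X_i|X^{i-1})$, and define the auxiliary $U_i := (M, Y_1^{i-1}, \ldots, Y_K^{i-1})$. The essential point is the Markov chain $X_i - (M, X^{i-1}) - (Y_1^{i-1}, \ldots, Y_K^{i-1})$: since each $Y_j^{i-1}$ is produced from $(X^{i-1}, A^{i-1})$ through the memoryless action channel and $A^{i-1}$ is a deterministic function of $M$, the past side informations are conditionally independent of $X_i$ given $(M, X^{i-1})$. Combining this with $X_i \perp X^{i-1}$ (i.i.d.\ source) yields $I(M;X_i|X^{i-1}) = H(X_i) - H(X_i|M,X^{i-1}) \ge H(X_i) - H(X_i|U_i) = I(U_i;X_i)$, so $nR \ge \sum_i I(U_i;X_i)$. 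I then verify that $U_i$ meets every requirement: $A_i$ is a function of $M$, hence of $U_i$, so $A=f(U)$; the causal rule $\Xh_{j,i} = g_{j,i}(M, Y_j^{i-1}, Y_{j,i})$ is a function of $(U_i, Y_{j,i})$, yielding the single-letter $\xh_j$; and, as $Y_{j,i}$ is generated from $(X_i,A_i)$ with fresh noise, conditioning on $(X_i,U_i)$ (which fixes $A_i=f(U_i)$) gives $Y_{j,i}\sim p(y_j|x_i,a_i)$, so the channel law $p(y_1,\ldots,y_K|x,a)$ is respected. Introducing a time-sharing variable $Q$ uniform over $[1:n]$ and absorbing it into $U := (U_Q, Q)$ converts the per-letter bounds into $R \ge I(U;X)$ with $\E d_j(X,\xh_j(U,Y_j)) \le D_j + \e$ and $\E\Lambda(A) \le C + \e$; letting $\e \to 0$ finishes the converse.

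For the cardinality bound I would invoke the Fenchel--Eggleston--Carath\'eodory support lemma, preserving the $|\Xc||\Ac| - 1$ free parameters of the joint law $p(x,a)$ (which pins down the source marginal, the action distribution and hence the cost, and the induced channel), the single functional $H(X|U)$ needed to preserve $I(U;X) = H(X) - H(X|U)$, and the $K$ conditional distortions $\E[d_j(X,\xh_j(U,Y_j))\mid U=u]$. This totals $|\Xc||\Ac| + K$ functionals, giving $|\Uc| \le |\Xc||\Ac| + K$.

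I expect the main obstacle to lie in the converse's choice of auxiliary: $U_i$ must bundle the past side information of all $K$ decoders so that each causal decoding rule collapses to a single-letter function of $(U_i, Y_{j,i})$, yet this enlarged $U_i$ must still obey $I(U_i;X_i) \le I(M;X_i|X^{i-1})$. The reconciliation hinges entirely on the Markov chain $X_i - (M,X^{i-1}) - (Y_1^{i-1},\ldots,Y_K^{i-1})$, whose validity rests on the memorylessness of the action channel and on $A^n$ being a deterministic function of $M$; establishing this relation cleanly is the crux of the argument.
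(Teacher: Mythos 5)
Your proposal is correct and takes essentially the same route as the paper: the converse uses the identical auxiliary $U_i = (M, Y_1^{i-1},\ldots,Y_K^{i-1})$, and your explicit Markov chain $X_i - (M,X^{i-1}) - (Y_1^{i-1},\ldots,Y_K^{i-1})$ is precisely the paper's justification (memoryless action channel plus $A^{i-1}$ being a function of $M$) for inserting the past side informations before dropping $X^{i-1}$, followed by the same time-sharing step and the same functional count ($|\Xc||\Ac|-1$ for $p(x,a)$, one for the rate, $K$ for the distortions) in the support-lemma cardinality bound. The achievability, which the paper omits as a straightforward extension of the single-decoder result, is the standard covering scheme you describe.
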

\textit{Remark 4.1}: Theorem \ref{thm:3} generalizes the corresponding result for one decoder in \cite[Theorem 3]{Permuter2010a}.

\begin{proof}
As the achievability scheme is a straightforward extension of the scheme in \cite[Theorem 3]{Permuter2010a}, we will omit the proof of achievability here. For the converse, given a code that satisfies the cost and distortion constraints, we have
\begin{align*}
nR &\ge H(M) \\
&= I(X^n;M) \\
& \stackrel{(a)}{=} \sum_{i=1}^n(H(X_i) - H(X_i|M, X^{i-1})) \\
&\stackrel{(b)}{= }\sum_{i=1}^n(H(X_i) - H(X_i|M, X^{i-1}, A^{i-1})) \\
&\stackrel{(c)}{= }\sum_{i=1}^n(H(X_i) - H(X_i|M, X^{i-1}, A^{i-1}, Y_1^{i-1}, \ldots, Y_K^{i-1})) \\
 &\ge \sum_{i=1}^n(H(X_i) - H(X_i|U_i)), 
\end{align*}
where $(a)$ follows from the fact that $X^n$ is a memoryless source; $(b)$ follows from the fact that $A^{i-1}$ is a function of $M$; $(c)$ follows from the fact that the action channel $p(y_1, y_2, \ldots, y_k|x, a)$ is a memoryless channel; and the last step follows from defining $U_i = (M, Y_1^{i-1}, \ldots, Y_K^{i-1})$. Finally, defining $Q$ to be a random variable uniform over $[1:n]$, independent of all other random variables, $U = (U_Q,Q)$, $X = X_Q$, $A = A_Q$ and $Y_j = Y_{jQ}$ for $j \in [1:K]$ then gives the required lower bound on the minimum rate required. Further, we have $A = f(U)$. It remains to verify that the cost and distortion constraints are satisfied. Verification of the cost constraint is straightforward. For the distortion constraint, we have for $j \in [1:K]$
\begin{align*}
\E \frac{1}{n} \sum_{i=1}^n d_j(X_i, \xh_{ji}(M, Y_j^{i})) \ge \E d_j(X, \xh'_{j}(U,Y_j)),
\end{align*} 
where we define $\xh_{j}'(U,Y_j):= \xh_{jQ}(M, Y_j^i)$. This shows that the definition of the auxiliary random variable $U$ satisfies the distortion constraints. Finally, the cardinality of $U$ can be upper bounded by using the support lemma \cite{Eggleston}. We require $|\Xc||\Ac|-1$ letters to preserve $P_{X,A}$, which also preserves the cost constraint. In addition, we require $K+1$ letters to preserve the rate and $K$ distortion constraints.
\end{proof}
We now turn to the case of noncausal reconstruction. For this setting, we give results only for the case of two decoders.
\subsection{Noncausal reconstruction for two decoders}

We first give a general achievability scheme for this setting.
\begin{theorem} \label{thm:2}
An achievable scheme for the lossy source coding with actions at the decoders is given by
\begin{align*}
R &\ge I(X;A) + \max\left\{I(X;U|A, Y_1), I(X;U|A,Y_2)\right\} \\
& \qquad + I(X;V_1|U, A, Y_1) + I(X;V_2|U,A,Y_2)
\end{align*}
for some $p(x)p(a|x)p(u|a,x)p(v_1|u,a,x)p(v_2|u,a,x)p(y_1, y_2|x,a)$ and reconstruction functions $\xh_1$ and $\xh_2$ satisfying
\begin{align*}
\E d_j(X, \xh_j(U, V_j, A, Y_j)) &\le D_j \mbox{ for } j = 1,2, \\
\E \Lambda(A) &\le C.
\end{align*}
\end{theorem}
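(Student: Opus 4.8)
The plan is to prove achievability by a \emph{layered Wyner--Ziv scheme} consisting of one common description layer $U$ that both decoders must recover, and two private refinement layers $V_1,V_2$ that are decoded only by decoder~$1$ and decoder~$2$ respectively, all built on top of an action-describing layer $A$. I would reuse the successive-covering-and-binning machinery of~\cite{El-Gamal--Kim2010} in exactly the form already used in the achievability of Theorem~\ref{thm:1}, appending the two nested refinement codebooks. The message is $M=(m_A,m_U,m_{V_1},m_{V_2})$, and the action encoder sets $a^n=a^n(m_A)$, so that both decoders take the same action before observing side information.

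For codebook generation I would proceed in a nested fashion. First generate $2^{n(I(X;A)+\e)}$ action sequences $a^n(m_A)\sim\prod_i p(a_i)$. For each $a^n(m_A)$ generate $2^{n(I(X;U|A)+\e)}$ sequences $u^n\sim\prod_i p(u_i|a_i)$ and partition them into $2^{nR_U}$ bins with $R_U=\max\{I(X;U|A,Y_1),I(X;U|A,Y_2)\}+2\e$. For each pair $(a^n,u^n)$ generate private codebooks $\{v_j^n\}$ of size $2^{n(I(X;V_j|U,A)+\e)}$ drawn $\sim\prod_i p(v_{ji}|u_i,a_i)$, and bin codebook $j$ into $2^{nR_{V_j}}$ bins with $R_{V_j}=I(X;V_j|U,A,Y_j)+2\e$. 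Encoding applies the covering lemma successively: find $m_A$ with $(x^n,a^n(m_A))\in\aep$, then $u^n$ with $(x^n,a^n,u^n)\in\aep$, then $v_1^n,v_2^n$ each jointly typical with $(x^n,a^n,u^n)$; the stated rates make every covering step succeed with vanishing probability, and the transmitted bin indices sum to exactly the claimed rate. Joint typicality of the full tuple $(X^n,A^n,U^n,V_1^n,V_2^n)$ follows from the conditional independence $V_1-(X,U,A)-V_2$ built into $p(v_1|u,a,x)p(v_2|u,a,x)$ via the Markov lemma.

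For decoding, each decoder takes the action $a^n(m_A)$, observes $Y_j^n$, and first recovers $u^n$ by searching bin $m_U$ for the unique codeword jointly typical with $(a^n,Y_j^n)$. The standard Wyner--Ziv binning analysis shows this succeeds whenever $R_U>I(X;U|A,Y_j)$, and the key point is that a \emph{single} common bin index must serve both decoders, so $R_U$ must exceed the larger of the two thresholds --- this is precisely where the term $\max\{I(X;U|A,Y_1),I(X;U|A,Y_2)\}$ originates. Decoder~$j$ then recovers $v_j^n$ from bin $m_{V_j}$ using side information $(u^n,a^n,Y_j^n)$, which succeeds because $R_{V_j}>I(X;V_j|U,A,Y_j)$, and sets $\xh_{ji}=\xh_j(u_i,v_{ji},a_i,Y_{ji})$. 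The distortion constraints $\E d_j(X,\xh_j(U,V_j,A,Y_j))\le D_j$ then follow from the typical average lemma applied to the jointly typical tuple, and the cost constraint from $\frac1n\sum_i\Lambda(A_i)\to\E\Lambda(A)\le C$ on the typical set.

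The main obstacle is conceptual rather than computational: the common layer must be decodable simultaneously by two decoders with \emph{different} side information, which forces the worst-case (i.e.\ maximum) binning rate and yields the $\max$ term; everything else reduces to repeated invocations of the covering and packing lemmas. The only technical care needed is to confirm that the two independently generated refinement codewords, together with the common layer, jointly satisfy the target typicality with respect to $p(v_1|u,a,x)p(v_2|u,a,x)$ despite being drawn from separate codebooks, which the Markov lemma resolves.
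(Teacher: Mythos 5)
Your proposal is correct and follows essentially the same route as the paper's own achievability sketch: the same nested codebook structure (action layer covered at rate $I(X;A)$, common layer $U$ covered at $I(X;U|A)$ and binned at the worst-case rate $\max\{I(X;U|A,Y_1),I(X;U|A,Y_2)\}$ so that a single bin index serves both decoders, private layers $V_j$ covered at $I(X;V_j|U,A)$ and binned at $I(X;V_j|U,A,Y_j)$), the same successive covering for encoding, the same Wyner--Ziv bin decoding, and the same typical-average-lemma argument for the distortion and cost constraints. The only cosmetic difference is your appeal to the Markov lemma for joint typicality of the full tuple $(X^n,A^n,U^n,V_1^n,V_2^n)$, which is not actually needed: decoder $j$ never touches $V_{3-j}^n$, so the paper simply applies the conditional typicality lemma to the tuple $(A,U,V_j,X,Y_j)$ separately for each decoder, using $Y_j^n\sim\prod_i p(y_{ji}|x_i,a_i)$.
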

We provide a sketch of achievability in Appendix \ref{appen1} since the techniques used are fairly straightforward. As an overview, the encoder first tells the decoders the action sequence to take. It then sends a common description of $X^n$, $U^n$, to both decoders. Based on the action sequence $A^n$ and the common description $U^n$, the encoder sends $V_1^n$ and $V_2^n$ to decoders 1 and 2 respectively. We do not require decoder 1 to decode $V_2^n$, or for decoder 2 to decode $V_1^n$. 

Theorem~\ref{thm:2} is optimum for the following special cases.

\begin{proposition} \label{prop:HB} \textit{Heegard-Berger-Kaspi~\cite{Heegard--Berger1985}, \cite{Kaspi1994} Extension.} Suppose the following Markov chain holds: $(X,A) - (A,Y_1) - (A, Y_2)$. Then, the rate-distortion-cost trade-off region is given by 
\begin{align*}
R &\ge I(X;A) + I(X;U|A, Y_2) \\
& \qquad + I(X;V_1|U, A, Y_1)
\end{align*}
for some $p(x)p(a|x)p(u,v_1|x,a)p(y_1|x,a)p(y_2|y_1,a)$ satisfying
\begin{align*}
\E d_1(X, \Xh_1(U, V_1, A, Y_1)) &\le D_1, \\
\E d_2(X, \Xh_2(U, A, Y_2)) &\le D_2, \\
\E \Lambda(A) &\le C.
\end{align*}
The cardinality of the auxiliary random variables is upper bounded by $|\mathcal{U}| \le |\Xc||\mathcal{A}|+2$ and $|V_1| \le |\mathcal{U}|(|\Xc||\mathcal{A}|+1)$. 
\end{proposition}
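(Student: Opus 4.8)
The plan is to prove the two inclusions separately: achievability is essentially immediate from Theorem~\ref{thm:2}, while the converse is the substantive part and follows a Heegard--Berger style layered argument adapted to the action setting.

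For achievability, I would specialize the general scheme of Theorem~\ref{thm:2} by setting $V_2$ to a constant, so that $I(X;V_2|U,A,Y_2)=0$ and decoder~2 reconstructs with $\xh_2(U,A,Y_2)$. It then remains to show that, under the hypotheses, the surviving $\max\{I(X;U|A,Y_1),I(X;U|A,Y_2)\}$ equals $I(X;U|A,Y_2)$. This follows from the degradedness: the test-channel factorization $p(u,v_1|x,a)p(y_1|x,a)p(y_2|y_1,a)$ gives the Markov chain $U-(A,Y_1)-Y_2$, hence $H(U|A,Y_1)=H(U|A,Y_1,Y_2)\le H(U|A,Y_2)$, and since $U-(X,A)-(Y_1,Y_2)$ makes $H(U|A,X,Y_1)=H(U|A,X,Y_2)=H(U|A,X)$, subtracting yields $I(X;U|A,Y_1)\le I(X;U|A,Y_2)$. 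Thus the rate of Theorem~\ref{thm:2} collapses to exactly $I(X;A)+I(X;U|A,Y_2)+I(X;V_1|U,A,Y_1)$, and the distortion and cost constraints carry over verbatim.

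For the converse, I would start from $nR\ge H(M)\ge I(X^n;M)=I(X^n;A^n)+I(X^n;M|A^n)$, using that $A^n$ is a function of $M$. The action layer is handled as in the proof of Theorem~\ref{thm:1}: since $X^n$ is i.i.d.\ and conditioning reduces entropy, $I(X^n;A^n)\ge\sum_i I(X_i;A_i)$, which time-shares to $nI(X;A)$. The remaining term $I(X^n;M|A^n)$ must be split into a common layer recovered by the weaker decoder~2 and a refinement layer for the stronger decoder~1. I would identify a common auxiliary $U_i$ and a refinement auxiliary $V_{1,i}$ built from $(M,A^n)$ together with the appropriate past and/or future side-information symbols (so that $(U_i,A_i,Y_{2,i})$ determines $\xh_{2,i}$ and $(U_i,V_{1,i},A_i,Y_{1,i})$ determines $\xh_{1,i}$), and then single-letterize $I(X^n;M|A^n)\ge\sum_i I(X_i;U_i|A_i,Y_{2,i})+\sum_i I(X_i;V_{1,i}|U_i,A_i,Y_{1,i})$ using the degradedness $(X,A)-(A,Y_1)-(A,Y_2)$ and the Csisz\'ar sum identity. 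A uniform time-sharing variable $Q$ is then absorbed into $U$, and the Markov and distortion constraints are checked against the identified reconstruction functions; the cardinality bounds $|\Uc|\le|\Xc||\Ac|+2$ and $|V_1|\le|\Uc|(|\Xc||\Ac|+1)$ follow from the support lemma (first preserving $p(x,a)$ and the relevant rate/distortion functionals for $U$, then the analogous functionals for $V_1$ given $U$).

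The main obstacle is the converse auxiliary identification. The two nested layers must telescope into \emph{exactly} the two conditional mutual information terms with the stated conditioning, and this forces a careful use of the degradedness to collapse the decoder-1 cross term (so that the residual information about $X_i$ beyond $(U_i,A_i,Y_{1,i})$ is captured by $V_{1,i}$ alone), together with the Csisz\'ar sum identity to reconcile the past/future asymmetry of the genie side information. A delicate point is that the reconstruction function assigned to decoder~1 must not secretly depend on $Y_2$, even though the common auxiliary $U_i$ may be defined using $Y_2$ symbols; keeping the two layers consistent with the physical side information available to each decoder while preserving the clean single-letter form is where the argument requires the most care.
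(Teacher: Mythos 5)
Your achievability argument is correct and coincides with the paper's: specialize Theorem~\ref{thm:2} with $V_2=\emptyset$ and collapse the $\max$ using degradedness. Your chain $I(X;U|A,Y_1)=H(U|A,Y_1)-H(U|A,X)\le H(U|A,Y_2)-H(U|A,X)=I(X;U|A,Y_2)$, justified by the Markov chains $U-(A,Y_1)-Y_2$ and $U-(X,A)-(Y_1,Y_2)$ implied by the stated factorization, is exactly the step the paper asserts without detail.

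The converse plan, however, has a genuine gap at its first move. You split $I(X^n;M)=I(X^n;A^n)+I(X^n;M|A^n)$, cash in the action term as $I(X^n;A^n)\ge\sum_i I(X_i;A_i)$, and then claim $I(X^n;M|A^n)\ge\sum_i I(X_i;U_i|A_i,Y_{2,i})+\sum_i I(X_i;V_{1,i}|U_i,A_i,Y_{1,i})$ for auxiliaries built from $(M,A^n)$ and side-information symbols. That second inequality is false in general. The obstruction is that $A^n$ is a function of the \emph{whole block} $X^n$, so conditioning on the sequence $A^n$ can hide far more about an individual $X_i$ than conditioning on $A_i$ does: writing $I(X^n;M|A^n,Y_2^n)=H(X^n|A^n,Y_2^n)-H(X^n|M,A^n,Y_2^n)$, the positive term satisfies $H(X^n|A^n,Y_2^n)\le\sum_i H(X_i|A_i,Y_{2,i})$, i.e., conditioning-reduces-entropy points the wrong way exactly where you need it. A concrete counterexample within the proposition's hypotheses: take $Y_1=Y_2=\emptyset$, $n=2$, $M=X^2$ (so both decoders reconstruct losslessly), $A_1=X_2$, $A_2$ constant, $\Lambda\equiv 0$. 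Then $I(X^2;M|A^2)=H(X^2|X_2)=H(X_1)$, while any $U_i$ that determines $\Xh_{2,i}=X_i$ forces $I(X_i;U_i|A_i)\ge H(X_i|A_i)$, so your claimed lower bound is at least $H(X_1|X_2)+H(X_2)=H(X_1)+H(X_2)$. Note also that $I(X^2;A^2)=H(X_2)$ strictly exceeds $\sum_i I(X_i;A_i)=0$ here: the surplus you discard by bounding the action term separately is precisely the deficit that breaks the second bound. (Incidentally, the converse of Theorem~\ref{thm:1}, which you cite as the model for this step, does not bound $I(X^n;A^n)$ separately either.)

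The paper's proof avoids this by never decoupling the two pieces. From $H(M)=H(A^n)+H(M|A^n)\ge I(X^n;A^n)+I(X^n;M|A^n,Y_2^n)$ it expands the second term and then recombines $I(X^n;A^n)+H(X^n|A^n,Y_2^n)=H(X^n)+H(Y_2^n|X^n,A^n)-H(Y_2^n|A^n)$, so the troublesome $H(X^n|A^n)$ cancels identically; the single conditioning-reduces-entropy step is applied to $-H(Y_2^n|A^n)$, where the sign is favorable, while $H(X^n)$ and $H(Y_2^n|X^n,A^n)$ tensorize exactly (i.i.d.\ source, memoryless action channel). The rest of your outline --- the Csisz\'ar-sum manipulation and the identification $U_i=(M,Y_1^{i-1},A^{n\setminus i},Y_2^{n\setminus i})$, $V_{1,i}=Y_{1,i+1}^n$, with the extra $Y_2$ components in $U_i$ being harmless because the reconstruction maps simply ignore them --- does match the paper once this recombination replaces your split.
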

The achievability for this proposition follows from Theorem~\ref{thm:2} by setting $V_2 = \emptyset$ and noting that since $(X,A) - (A,Y_1) - (A, Y_2)$, the terms in the $\max\{.\}$ function simplifies to $I(X;U|A, Y_2)$. We give a proof of converse as follows.

\begin{proof}[Converse]
Given a code that satisfies the constraints, {\allowdisplaybreaks
\begin{align*}
nR & \ge H(M) \\
& = H(M, A^n) \\
& = H(A^n) + H(M|A^n) \\
& \ge H(A^n) - H(A^n|X^n) + H(M|A^n, Y_2^n) - H(M|Y_2^n, A^n, X^n) \\
& = I(X^n; A^n) + I(X^n;M|A^n, Y^n_2) \\
& = I(X^n;A^n) + I(X^n;M, Y_1^n|A^n, Y_2^n) - I(X^n;Y_1^n|M, A^n, Y_2^n)\\
& = I(X^n;A^n) + H(X^n|A^n, Y_2^n) - H(X^n|M, Y_1^n, A^n, Y_2^n) - I(X^n;Y_1^n|M, A^n, Y_2^n) \\
& = I(X^n;A^n) + H(X^n|A^n, Y_2^n) - \sum_{i=1}^n (H(X_i|M, Y_1^n, A^n, Y_2^n, X^{i-1}) + I(X^n;Y_{1i}|M, A^n, Y_2^n, Y_{1}^{i-1})) \\
& \ge I(X^n;A^n) + H(X^n|A^n, Y_2^n) - \sum_{i=1}^n (H(X_i|M, Y_1^n, A^n, Y_2^n) + I(X^n;Y_{1i}|M, A^n, Y_2^n, Y_{1}^{i-1})) \\
& \stackrel{(a)}{=} I(X^n;A^n) + H(X^n|A^n, Y_2^n) - \sum_{i=1}^n (H(X_i|M, Y_1^n, A^n, Y_2^n) + I(X_i;Y_{1i}|M, A^n, Y_2^n, Y_{1}^{i-1})) \\
& = I(X^n;A^n) + H(X^n|A^n, Y_2^n) - \sum_{i=1}^n H(X_i|M, Y_1^{i-1}, A^n, Y_2^n) \\
& \quad + \sum_{i=1}^n (I(X_i; Y_{1i}^n| M,  A^n, Y_2^n, Y_1^{i-1}) - I(X_i;Y_{1i}|M, A^n, Y_2^n, Y_{1}^{i-1}))\\
& = I(X^n;A^n) + H(X^n|A^n, Y_2^n) - \sum_{i=1}^n H(X_i|M, Y_1^{i-1}, A^n, Y_2^n) \\
& \quad + \sum_{i=1}^n (I(X_i; Y_{1,i+1}^{n}| M,  A^n, Y_2^n, Y_1^{i}), \\
& = I(X^n;A^n) + H(X^n|A^n, Y_2^n) - \sum_{i=1}^n H(X_i|M, Y_1^{i-1}, A^n, Y_2^n) \\
& \quad + \sum_{i=1}^n (I(X_i; Y_{1,i+1}^{n}| M,  A^n, Y_2^{n \backslash i}, Y_{1i},Y_1^{i-1}), 
\end{align*}
}where $(a)$ follows from the fact that $X^{n \backslash i} - (M, A^n, Y_2^n, Y_{1}^{i-1}, X_i) - Y_{1i}$ and the last step follows from the Markov Chain assumption $X_i - (A_i, Y_{1i}) - (A_i, Y_{2i})$.
Consider now,
\begin{align*}
I(X^n;A^n) + H(X^n|A^n, Y_2^n) & = I(X^n;A^n) + H(X^n, Y_2^n|A^n) - H(Y_2^n|A^n) \\
&= H(X^n) + H(Y_2^n|A^n, X^n) -H(Y^n_2|A^n) \\
& \ge \sum_{i=1}^n (H(X_i) + H(Y_{2i}|X_i, A_i) -H(Y_{2i}|A_i)). 
\end{align*}
Hence, 
\begin{align*}
nR &\ge \sum_{i=1}^n (H(X_i) + H(Y_{2i}|X_i, A_i) -H(Y_{2i}|A_i)) -  \sum_{i=1}^n H(X_i|M, Y_1^{i-1}, A^n, Y_2^n) \\
& \quad + \sum_{i=1}^n (I(X_i; Y_{1,i+1}^{n}| M,  A^n, Y_2^{n\i},Y_{1i} ,Y_1^{i-1}). 
\end{align*}
Define now $Q$ to be a random variable uniform over $[1:n]$, independent of all other random variables; $X = X_Q$, $Y_1 = Y_{1Q} $, $Y_{2} = Y_{2Q}$, $A = A_Q$, $U_i = (M, Y_1^{i-1}, A^{n\backslash i}, Y_2^{n \backslash i })$, $V_i = Y_{1,i+1}^n$, $U= (U_Q, Q)$ and $V = V_Q$. Then, we have
\begin{align*}
R &\ge H(X) + H(Y_{2}|X, A) -H(Y_{2}|A, Q) -  H(X|A, Y_2, U) \\
& \quad + I(X; V| A, Y_1, U)  \\
& \ge H(X) + H(Y_{2}|X, A) -H(Y_{2}|A) -  H(X|A, Y_2, U) \\
& \quad + I(X; V| A, Y_1, U)  \\
& = I(X;A) + I(X;U|A, Y_2) + I(X; V| A, Y_1, U).
\end{align*}
It remains to verify that the definitions of $U$, $V$ and $A$ satisfy the distortion and cost constraints, which is straightforward. Prove of the cardinality bounds follows from standard techniques.
\end{proof}
The next proposition extends our results for the case of switching dependent side information to the a class of lossy source coding with switching dependent side information. 

\begin{proposition}\textit{Special case of switching dependent side information.} \label{prop2}
Let $Y_1 = X, Y_2 = Y$ if $A = 1$ and $Y_1 = Y, Y_2 = X$ if $A = 2$ and for all $x$, there exists $\xh_1$ and $\xh_2$ such that $d_1(x,\xh_1) = 0$ and $d_2(x, \xh_2) = 0$. Then, the rate-distortion-cost trade-off region is given by 
\begin{align*}
R &\ge I(X;A) + \max\{\P(A=2) I(X;U_1|A=2, Y), \\
&\qquad \qquad \qquad \qquad \P(A=1)I(X;U_2|A = 1,Y)\}
\end{align*}
for some $p(x,y)p(a|x)p(u_1|x,a=2)p(u_2|x,a=1)$ satisfying
\begin{align*}
\P(A = 2)\E d_1(X, \Xh_1(Y, U_1)|A = 2) &\le D_1, \\
\P(A = 1)\E d_2(X, \Xh_2(Y, U_2)|A = 1) &\le D_2, \\
\E \Lambda(A) &\le C.
\end{align*}
The cardinality of the auxiliary random variables is upper bounded by $|\mathcal{U}_1|\le |\Xc|+1$ and $|\mathcal{U}_2|\le |\Xc|+1$.
\end{proposition}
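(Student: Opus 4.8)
The plan is to prove the region by a Wyner--Ziv-style achievability and a matching converse, exploiting the symmetry between the two decoders.

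For achievability I would invoke Theorem~\ref{thm:2} rather than build a scheme from scratch. Set $V_1=V_2=\emptyset$ and take the common description to be the pair $U=(U_1,U_2)$, where $U_1$ is generated only on the coordinates with $A=2$ (so $U_1$ is conditionally independent of $X$ given $A=1$) and $U_2$ only on the coordinates with $A=1$. Under the switching channel $Y_1=X$ whenever $A=1$, hence $I(X;U|A,Y_1)=\P(A=2)I(X;U_1,U_2|A=2,Y)=\P(A=2)I(X;U_1|A=2,Y)$, the last equality because $U_2$ carries no information about $X$ when $A=2$; symmetrically $I(X;U|A,Y_2)=\P(A=1)I(X;U_2|A=1,Y)$. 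Substituting into Theorem~\ref{thm:2} collapses the rate to exactly the claimed expression, and the reconstructions $\Xh_1(U_1,A{=}2,Y)$ and $\Xh_2(U_2,A{=}1,Y)$ (with the zero-distortion symbol used at the coordinates where the decoder observes $X$) meet the stated distortion constraints. I would also point out the equivalent ``modulo-sum'' reading, exactly as in Corollary~\ref{coro:1}: the link carries $I(X;A)$ for the action index plus the XOR of the two Wyner--Ziv bin indices, each decoder recovering the bin index it lacks for free from the coordinates where it sees $X$.

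For the converse I would derive a single-letter lower bound for each decoder and then intersect. Consider decoder~1 and a code meeting the constraints. Writing $nR\ge H(M)=H(A^n)+H(M|A^n)$, bound $H(A^n)\ge I(X^n;A^n)$ and, using that $M$ is a function of $X^n$, $H(M|A^n)\ge I(X^n;M|A^n,Y_1^n)=H(X^n|A^n,Y_1^n)-H(X^n|M,A^n,Y_1^n)$. The crucial step is to group $I(X^n;A^n)+H(X^n|A^n,Y_1^n)=H(X^n)+H(Y_1^n|X^n,A^n)-H(Y_1^n|A^n)$ and single-letterize this clean block, via the DMC property and ``conditioning reduces entropy,'' to $\sum_i[H(X_i)+H(Y_{1i}|X_i,A_i)-H(Y_{1i}|A_i)]=\sum_i[I(X_i;A_i)+H(X_i|A_i,Y_{1i})]$; meanwhile $-H(X^n|M,A^n,Y_1^n)$ single-letterizes to $-\sum_iH(X_i|U_{1,i},Y_{1i})$ with $U_{1,i}=(M,A^n,Y_1^{n\setminus i},X^{i-1})$. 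Since $A_i\in U_{1,i}$, the blocks combine to $\sum_i[I(X_i;A_i)+I(X_i;U_{1,i}|A_i,Y_{1i})]$. Introducing a time-sharing $Q$ and folding it into $U_1$ gives $R\ge I(X;A)+I(X;U_1|A,Y_1)$, and since $Y_1=X$ on $\{A=1\}$ the term $I(X;U_1|A{=}1,Y_1)$ vanishes, leaving $R\ge I(X;A)+\P(A=2)I(X;U_1|A=2,Y)$. The Markov relation $Y_i-(X_i,A_i)-U_{1,i}$ holds because $Y_i$ is produced from $X_i$ alone, so the induced law lies in the family $p(x,y)p(a|x)p(u_1|x,a{=}2)$; the distortion bound $\P(A{=}2)\E[d_1|A{=}2]\le D_1$ follows by discarding the nonnegative $\{A=1\}$ distortion terms, and the cost and cardinality bounds are routine via the support lemma. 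The same argument with the decoders interchanged yields $R\ge I(X;A)+\P(A=1)I(X;U_2|A=1,Y)$, and as both hold for the same code the maximum is a valid lower bound.

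The main obstacle is the converse, and specifically handling the Wyner--Ziv ``first term'' in the presence of action-dependent side information. Because $A^n$ is a deterministic function of $X^n$, the naive route---conditioning on $A^n$ and expanding $I(X^n;M|A^n,Y)$ directly with an auxiliary containing $A^n$---pollutes the leading entropy with the information about $X_i$ carried by $A^n$, so that the per-letter term becomes $I(X_i;M|U_{1,i}^-,Y_i)\le I(X_i;U_{1,i}|A_i,Y_i)$, an inequality pointing the wrong way for a lower bound. The resolution above is to never let the full action sequence enter the first term: one pays for the actions once through $I(X^n;A^n)$, combines it with $H(X^n|A^n,Y_1^n)$ so that only the single-letter $A_i$-conditioning survives, and confines all remaining genie information to the auxiliary $U_{1,i}$, which appears only in the subtracted term. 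Checking that the $\{A=1\}$ contribution then collapses through $Y_1=X$ is precisely what makes the converse meet the achievable rate.
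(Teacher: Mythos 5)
Your proposal is correct and follows essentially the same route as the paper's own proof: achievability by specializing Theorem~\ref{thm:2} with $V_1=V_2=\emptyset$ and a $U$ that switches between $U_1$ and $U_2$ according to $A$, and a converse that bounds each decoder separately via $H(M)=H(A^n)+H(M|A^n)$, groups $I(X^n;A^n)+H(X^n|A^n,Y_1^n)$ into the entropy block $H(X^n)+H(Y_1^n|X^n,A^n)-H(Y_1^n|A^n)$ before single-letterizing, defines the auxiliary from $(M,A^n,Y_1^{n\setminus i})$, collapses the $\{A=1\}$ contribution using $Y_1=X$, and intersects the two bounds. The only deviations (including $X^{i-1}$ in the auxiliary, and phrasing the final fold of $Q$ via per-letter mutual informations rather than the paper's entropy grouping) are harmless and do not change the argument.
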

Achievability follows from Theorem~\ref{thm:2} by setting $V_1 = V_2 = \emptyset$ and $U = U_2$ if $A = 1$ and $U = U_1$ if $A = 2$. We give the proof of converse as follows. 

\begin{proof}[Converse]
Given a code that satisfies the cost and distortion constraints, consider the rate required for decoder 1. We have
\begin{align*}
nR & \ge H(M) \\
& = H(M, A^n) \\
& = H(A^n) + H(M|A^n) \\
& \ge H(A^n) - H(A^n|X^n) + H(M|A^n, Y_1^n) - H(M|Y_1^n, A^n, X^n) \\
& = I(X^n; A^n) + I(X^n;M|A^n, Y^n_1) \\
& = I(X^n;A^n) + H(X^n, Y_1^n|A^n) - H(Y_1^n|A^n) - H(X^n|M, A^n, Y_1^n) \\
&= H(X^n) + H(Y_1^n|A^n, X^n) -H(Y^n_1|A^n) - H(X^n|M, A^n, Y_1^n)\\
& \ge \sum_{i=1}^n (H(X_i) + H(Y_{1i}|X_i, A_i) -H(Y_{1i}|A_i)) -  \sum_{i=1}^n H(X_i|M, A^n, Y_1^n). 
\end{align*}
As before, we define $Q$ to be an uniform random variable over $[1:n]$, independent of all other random variables. We then have
\begin{align*}
R &\ge H(X_Q|Q) + H(Y_{1Q}|X_Q,A_Q,Q) - H(Y_Q|A_Q,Q) - H(X_Q|M, A^n, Y_1^n,Q) \\
&\stackrel{(a)}{\ge} H(X) + H(Y_{1}|X,A) - H(Y_1|A) - H(X|M, A^n, Y_1^n) \\
& \stackrel{(b)}{=}  I(X;A) + I(X;U_1|Y_1,A).
\end{align*}
$(a)$ follows from the discrete memoryless nature of the action channel and the fact that conditioning reduces entropy; $(b)$ follows from defining $U_{1i} = (M, A^{n \backslash i}, Y_1^{n\backslash i})$ and $U_1 = (U_{1Q}, Q)$. Expanding the second term in terms of $A$ and using the observation that $Y_1 =X$ when $A = 1$ and $Y_1 = Y$ when $A = 2$, we obtain
\begin{align*}
R &\ge I(X;A) + \P(A = 2)I(X;U_1|Y, A=2).
\end{align*}
For decoder 2, the same steps with side information $Y_2$ instead of $Y_1$ and defining $U_{2i} = (M, A^{n \backslash i}, Y_2^{n\backslash i})$, $U_2 = (U_{2Q},Q)$ yield
\begin{align*}
R &\ge I(X;A) + \P(A = 1)I(X;U_2|Y, A=1).
\end{align*}
Taking the maximum over two lower bounds yield
\begin{align*}
R \ge I(X;A) + \max\{\P(A = 2)I(X;U_1|Y, A=2), \P(A = 1)I(X;U_2|Y, A=1)\}
\end{align*}
for some $p(a|x)p(u_1, u_2|x,a)$. Verifying the cost constraint is straightforward. As for the distortion constraint, we have for the decoder 1
\begin{align*}
\frac{1}{n}\E d_1(X^n, \xh_1^n(M, A^n, Y_1^n)) &= \E d_1(X, \xh_1(U_1, A, Y_1)) \\
& = \P(A=2) \E (d_1(X, \xh_1(U_1, Y))|A=2).
\end{align*} 
The same arguments hold for decoder 2. It remains to show that the probability distribution can be restricted to the form $p(a|x)p(u_1|a,x)p(u_2|a,x)$. Observe that $\P(A=2) \E (d_1(X, \xh_1(U_1, Y))|A=2)$ and $\P(A = 2)I(X;U_1|Y, A=2)$ depends on the joint distribution only through the marginal $p(a,u_1|x)$ and $\P(A=1) \E (d_2(X, \xh_2(U_2, Y))|A=1)$ and $\P(A = 1)I(X;U_2|Y, A=1)$ depends on the joint distribution only through the marginal $p(a,u_2|x)$. Hence, restricting the joint distribution to the form $p(a|x)p(u_1|a,x)p(u_2|a,x)$ does not affect the rate, cost or distortion constraints. It remains to bound the cardinality of the auxiliary random variables used, which follows from standard techniques. This completes the proof of converse.
\end{proof}
\textit{Remark 4.2:} The condition on the distortion constraints is simply to remove distortion offsets. It can be removed in a fairly straightforward manner. 

\textit{Remark 4.3:} As with the lossless source coding with switching dependent side information case, a modulo sum interpretation for the terms in the $\max$ expression is possible. When $A = 1$, the encoder codes for decoder 2, resulting, after binning, in an index $I_2$  for the codeword $U^n_2$; and when $A = 2$, the encoder codes for decoder 1, resulting, after binning, in an index $I_1$ for the codeword $U_1^n$. The encoder sends out the modulo sum of the indices of the two codewords ($I_1 \oplus I_2$) along with the index of the action codeword. Decoder 1 has the $X_i$ sequence when $A = 2$ and hence, it has the index $I_2$. Therefore, it can recover it's desired index $I_1$ from $I_1 \oplus I_2$. A similar analysis holds for decoder 2.

\textit{Example 2: Binary source with Hamming distortion and no cost constraint.} Let $Y = \emptyset$ and $X\sim \Bern(1/2)$. Assume no cost on the actions taken: $\Lambda(A = 1) = \Lambda(A = 2) = 0$ and let the distortion measure be Hamming. Then, the rate distortion trade-off evaluates to
\begin{align*}
R &= \min_{\alpha}\max\left\{\alpha\left(1 - H_2\left(D_1/\alpha \right)\right)\mathbf{1}\left(D_1/\alpha \le 1/2\right), \right. \\
&\quad \left. (1-\alpha)\left(1 - H_2\left(D_2/(1-\alpha)\right)\right)\mathbf{1}\left(D_2/(1-\alpha) \le 1/2\right)\right\},
\end{align*}
where $\mathbf{1}(x)$ denotes the indicator function. As a check, note that if $D_1, D_2 \to 0$, then the rate obtained is $1/2$, which agrees with the rate obtained in Corollary \ref{coro:1} for the lossless case. The result follows from explicitly evaluating the result in Proposition \ref{prop2}. Let $\P(A=2) = \alpha$. From Proposition 2, we have
\begin{align*}
R &\ge I(X;A) + \P(A = 2)I(X;U_1|Y, A=2) \\
& = 1 - (1-\alpha)H(X|A=1) - \alpha H(X|A=2) +\alpha H(X|A=2) - \alpha H(X|U_1, A=2) \\
& \ge \alpha -\alpha H(X|U_1, A=2) \\
& \ge \alpha(1 - H(X \oplus \Xh_1|U_1, A=2)) \\
& \ge \alpha\left(1 - H_2\left(\frac{D_1}{\alpha}\right )\right)\mathbf{1}\left(\frac{D_1}{\alpha} \le 1/2\right). 
\end{align*}
The last step follows from the observations that (i) if $D_1/\alpha > 1/2$, then we lower bound $R$ by $0$; and (ii) if $D_1/\alpha \le 1/2$, then from the distortion constraint $\alpha \E d(X, \Xh_1|A=2) \le D_1$, $H(X \oplus \Xh_1|A=2) \le H_2(D_1/\alpha)$. The other bound is derived in the same manner. The fact that this rate can be attained is straightforward, since we can choose $U_1 = \Xh_1$ when $A=2$ and $U_2 = \Xh_2$ when $A=1$.
In this example, the action sequence is independent of the source, but unlike the case of lossless source coding, $\P(A=1)$ is not in general equal to $\P(A=2)$. It depends on the distortion constraints for the individual decoders. A surface plot of the rate versus distortion constraints for the two decoders is shown in Figure \ref{fig:bin_dist}. 

\begin{figure}[!h]
\begin{center}
\scalebox{0.4}{\includegraphics{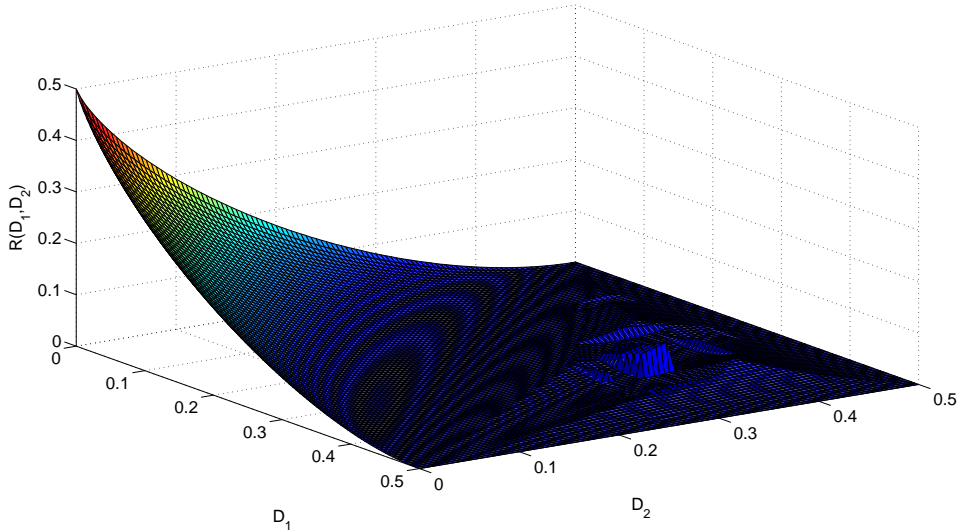}}
\caption{Plot of rate versus distortions. The figure above plots the rate distortion surface $R(D_1,D_2)$ for the Example 2.  There is no side information, i.e., $Y = \emptyset$ and $X \sim \Bern(1/2)$. Assume no cost on the actions taken: $\Lambda(A = 1) = \Lambda(A = 2) = 0$ and let the distortion measure be Hamming. Note that if any of $D_1,D_2 \rightarrow 0.5$, $R$ approaches $0$, also if $D_1=D_2=0$, rate is $0.5$} \label{fig:bin_dist}
\end{center}
\end{figure}  

\subsection{Connections with Complementary Delivery}
In the prequel, we consider several cases for switching dependent side
information in 
which the achievability scheme has a simple ``modulo sum'' interpretation for
the terms in the $\max$ function. This interpretation is not unique to our setup
and in this subsection, we consider the complementary delivery setting
\cite{Kimura2006} in which this interpretation also arises. Formally, the
complementary delivery problem is a special case of our setting and is obtained
by letting $A = \emptyset$, $X = (\Xt, \Yt)$, $\P(Y_1, Y_2|X) = 1_{Y_1 = \Xt,
Y_2 = \Yt}$, $\Lambda(A) = 0$, $d_1(X, \Xh_1) = d'_1(\Yt, \Xh_1)$ and $d_2(X,
\Xh_2) = d'_2(\Xt, \Xh_2)$. For this subsection, for notational convenience, we
will use $X$ in place of $\Xt$, $Y$ in place of $\Yt$, $\Yh$ in place of $\Xh_1$
and $\Xh$ in place of $\Xh_2$. This setting is shown in Figure~\ref{fig:comdel}. 
\begin{figure}[!h]
\psfrag{xy}[c]{$(X^n, Y^n)$}
\psfrag{x}[c]{$X^n$}
\psfrag{y}[c]{$Y^n$}
\psfrag{r}[c]{$R$}
\psfrag{yh}[c]{$\Yh^n$}
\psfrag{xh}[c]{$\Xh^n$}
\psfrag{d1}[c]{Decoder 1}
\psfrag{d2}[c]{Decoder 2}
\psfrag{e}[c]{Encoder}
\begin{center}
\scalebox{0.85}{\includegraphics{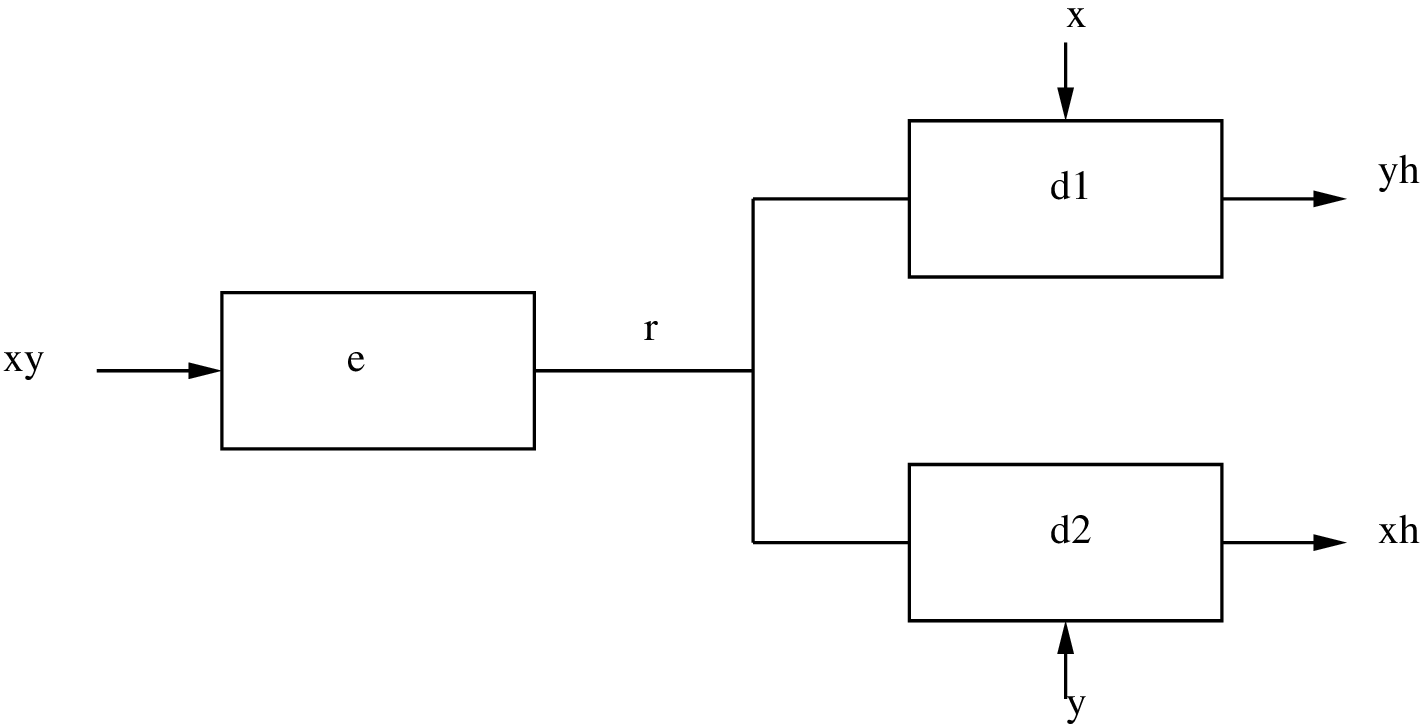}}
\end{center}
\caption{Complementary Delivery setting} \label{fig:comdel}
\end{figure}
In~\cite{Kimura2006}, the following achievable
rate was established{\allowdisplaybreaks
\begin{align}
R(D_1, D_2) \ge \max\{ I(U;Y|X), I(U;X|Y) \}, \label{eqn:com_del}
\end{align}
for some $p(u|x,y)$ satisfying $\E d_1(Y, \Yh(U,X)) \le D_1$ and $\E d_2(X, \Xh(U,Y)) \le D_2$.} 

Our achievability scheme in Theorem~\ref{thm:2} generalizes this scheme when
specialized 
to the complementary delivery setting, but we do not yet know if our achievable
rate can be strictly smaller for the same distortions. However, by taking a
modulo sum interpretation for the terms in the $\max\{.\}$ function in
(\ref{eqn:com_del}), as we have done for several examples in this paper, we are
able to give simple proofs and explicit characterization for two canonical cases: the
Quadratic Gaussian and the doubly symmetric binary Hamming distortion
complementary delivery problems. While characterizations for these two settings also appear independently in \cite{Timo}, our approach in characterizing these settings is different from that in \cite{Timo}, and we believe would be of interest to readers. Furthermore, by taking the ``modulo sum''
interpretation, we establish the following, which may be
a useful observation in practice: ``\textit{For the Quadratic Gaussian
complementary delivery problem, if one has a good code (in the sense of
achieving the optimum rate distortion tradeoff) for the point to point Wyner-Ziv
\cite{Wyner} Quadratic Gaussian setup, then a simple modification exists to turn
the code into a good code for the Quadratic Gaussian complementary delivery
problem.}'' A similar observation holds for the doubly symmetric binary Hamming
distortion case. We first consider the Quadratic Gaussian case.

\begin{proposition} \textit{Quadratic Gaussian complementary delivery.} 
Let $Y = X+Z$, where $Z \sim N(0, N)$ is independent of $X\sim N(0,P)$, and the
distortion measures be mean square distortion. Let $P' = PN/(P+N)$. The rate
distortion region for the non-trivial constraints of $D_2 \le P'$ and $D_1 \le
N$ is given by
\begin{align*}
R(D_1, D_2) = \max\left\{\frac{1}{2}\log\left(\frac{N}{D_1}\right), \frac{1}{2}\log\left(\frac{P'}{D_2}\right) \right\}.
\end{align*}
\end{proposition}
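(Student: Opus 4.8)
The plan is to establish matching achievability and converse bounds for the Quadratic Gaussian complementary delivery problem by specializing Theorem~\ref{thm:2} for achievability and deriving a tight lower bound for the converse. For \textbf{achievability}, I would use the complementary delivery achievable rate \eqref{eqn:com_del}, namely $R \ge \max\{I(U;Y|X), I(U;X|Y)\}$, and choose $U$ to be a Gaussian auxiliary variable tailored to the Wyner--Ziv structure. Specifically, recall that in the point-to-point Gaussian Wyner--Ziv problem with source $X$, side information $Y = X+Z$, and distortion $D$, the optimal auxiliary is $U = X + \text{(independent Gaussian noise)}$ with a variance tuned so that the MMSE estimate of $X$ from $(U,Y)$ achieves distortion $D$, yielding rate $\frac{1}{2}\log(P'/D)$ where $P' = PN/(P+N)$ is the Wyner--Ziv MMSE of $X$ given $Y$ alone. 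The key observation exploiting the symmetry here is that $X = Y - Z$ makes the problem symmetric under swapping roles: estimating $Y$ from $(U,X)$ is a Gaussian Wyner--Ziv problem with innovation noise $Z$, and the relevant conditional variance of $Y$ given $X$ is exactly $N$. Thus I would pick a single Gaussian $U$ (jointly Gaussian with $(X,Y)$) so that $I(U;X|Y) = \frac{1}{2}\log(P'/D_2)$ and $I(U;Y|X) = \frac{1}{2}\log(N/D_1)$ simultaneously, and verify both distortion constraints via standard MMSE estimation.

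For the \textbf{converse}, I would prove the two bounds $R \ge \frac{1}{2}\log(N/D_1)$ and $R \ge \frac{1}{2}\log(P'/D_2)$ separately. Each reduces to a standard point-to-point Gaussian Wyner--Ziv converse: for the bound involving $D_2$, decoder~2 must reconstruct $X$ (here $\Xt$) to within distortion $D_2$ using the message and its side information $Y$, so the usual Wyner--Ziv converse argument gives $R \ge \frac{1}{2}\log(P'/D_2)$, invoking that the conditional variance of $X$ given $Y$ is $P'$. Symmetrically, decoder~1 reconstructs $Y$ to within $D_1$ using the message and side information $X$, and since $Y$ given $X$ has conditional variance $N$, the Wyner--Ziv converse yields $R \ge \frac{1}{2}\log(N/D_1)$. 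Taking the maximum of the two bounds gives the claimed region. The non-triviality constraints $D_2 \le P'$ and $D_1 \le N$ ensure both logarithms are nonnegative, so that each decoder's side information alone does not already meet the distortion target.

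I expect the \textbf{main obstacle} to be verifying that a single choice of Gaussian auxiliary $U$ simultaneously achieves both the required conditional mutual informations and both distortion constraints, rather than requiring two separate auxiliaries for the two decoders. The modulo-sum interpretation emphasized in the text is precisely what resolves this: one encodes a single Wyner--Ziv description (codeword $U^n$) and, after binning, sends a combined index that each decoder can disambiguate using its own side information ($X$ for decoder~1, $Y$ for decoder~2), because $\max\{I(U;Y|X), I(U;X|Y)\}$ rather than the sum governs the rate. I would therefore first confirm that the Gaussian test channel for the tighter of the two distortion constraints also furnishes enough resolution for the looser one, so that the two binning requirements are compatible under the single codebook. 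The remaining steps---computing the MMSE distortions and the mutual information expressions for jointly Gaussian variables---are routine Gaussian calculations that I would not belabor.
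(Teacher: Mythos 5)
Your proposal is correct, and your converse is exactly the paper's: cut-set bounds to each decoder individually, each evaluating to the corresponding Gaussian conditional (equivalently, Wyner--Ziv) rate--distortion function, giving $R \ge \frac{1}{2}\log(N/D_1)$ and $R \ge \frac{1}{2}\log(P'/D_2)$. Where you genuinely differ is the achievability. The paper never designs a single joint auxiliary: it builds two separate point-to-point Wyner--Ziv codes --- one describing $Y^n$ for decoder 1 (which has $X^n$), one describing $X^n$ for decoder 2 (which has $Y^n$) --- and transmits only the modulo sum $I_X \oplus I_Y$ of the two bin indices. The fact making this work is that encoder side information is redundant in the Gaussian Wyner--Ziv problem, so the description aimed at decoder 2 is a function of $X^n$ alone; decoder 1, knowing $X^n$ exactly, can re-run that encoding, recover $I_X$, and strip it off the XOR (symmetrically for decoder 2). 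In the language of \eqref{eqn:com_del} this is $U = (U_X, U_Y)$ with $U_X - X - Y - U_Y$. You instead evaluate \eqref{eqn:com_del} directly with one scalar Gaussian auxiliary, which also works; the ``main obstacle'' you flag dissolves once you write $U = aX + bZ + W$ with $W$ independent Gaussian noise, since then $I(U;Y|X) = \frac{1}{2}\log\bigl(1 + b^2 N/\sigma_W^2\bigr)$ and $I(U;X|Y) = \frac{1}{2}\log\bigl(1 + (a-b)^2 P'/\sigma_W^2\bigr)$ are controlled by two independent parameters, so both can be driven to their targets simultaneously, with MMSE reconstruction meeting $D_1$ and $D_2$ exactly; there is no ``tighter versus looser'' compatibility issue to verify. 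What the paper's route buys that yours does not is the operational content of Remark 4.4: any good practical code for the point-to-point Gaussian Wyner--Ziv problem converts, by XORing indices, into a good code for complementary delivery. What your route buys is a self-contained single-letter verification that does not lean on the no-rate-loss property of encoder side information in the Gaussian Wyner--Ziv problem.
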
  
\begin{proof}

\noindent \textit{Converse}

The converse follows from straightforward cutset bound arguments. 
The reader may notice that the expression given above is the maximum of the
Quadratic Gaussian Wyner-Ziv~\cite{Wyner} rate to decoder 1 and the Quadratic
Gaussian Wyner-Ziv rate to decoder 2, or equivalently the maximum of the two
cutset bounds. Clearly, this rate is the lowest possible for the given
distortions. 

\noindent \textit{Achievability}

We now show that it is also achievable using a modulo sum interpretation for \eqref{eqn:com_del}. Consider first encoding for decoder 1. From the Quadratic Gaussian Wyner-Ziv result, we know that side information at the encoder is redundant. Therefore, without loss of optimality, the encoder can code for decoder 1 using only $Y^n$, resulting in the codeword $U_Y^n$ and the corresponding index $I_Y$ after binning. Similarly, for decoder 2, the encoder can code for decoder 2 using $X^n$ only, resulting in the codeword $U_X^n$ and index $I_X$ after binning. The encoder then sends out the index $I_X \oplus I_Y$. Since decoder 1 has the $X^n$ sequence as side information, it knows the index $I_X$ and can therefore recover $I_Y$ from $I_X \oplus I_Y$. The same decoding scheme works as well for decoder 2. Therefore, we have shown the achievability of the given rate expression. We note further that this scheme corresponds to setting $U = (U_X, Y_Y)$ such that $U_X - X - Y - U_Y$ in rate expression (\ref{eqn:com_del}). 
\end{proof}

\textit{Remark 4.4}: As shown in our proof of achievability, if we have a good practical code for the Wyner-Ziv Quadratic Gaussian problem, then we also have a good practical code for the complementary delivery problem setting. We first develop two point to point codes: one for the Wyner-Ziv Quadratic Gaussian case with $X$ as the source and $Y$ as the side information, and another for the case where $Y$ is the source and $X$ is the side information. A good code for the complementary delivery setting is then obtained by taking the modulo sum of the indices produced by these two point to point codes.

We now turn to the doubly symmetric binary sources with Hamming distortion case. Here, the achievability scheme involves taking the modulo sum of the sources $X^n$ and $Y^n$.

\begin{proposition}
\textit{Doubly symmetric binary source with Hamming distortion.} Let $X \sim \Bern(1/2)$, $Y \sim \Bern(1/2)$, $X \oplus Y \sim \Bern(p)$ and both distortion measures be Hamming distortion. Assume, without loss of generality, that $D_1, D_2 \le p$. Then, 
\begin{align*}
R(D_1, D_2) = \max\{H(p) - H(D_1), H(p) - H(D_2)\}.
\end{align*}
\end{proposition}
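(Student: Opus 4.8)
The plan is to prove both directions separately, exploiting the symmetry of the problem. For the converse, I would invoke cutset-type arguments exactly as in the Quadratic Gaussian case. Observe that for decoder~2, which must reconstruct $X$ with side information $Y$, even giving the encoder full access to $Y$ (i.e.\ a Wyner--Ziv setup where side information is available at both ends, hence a conditional rate-distortion problem) can only lower the rate. For the doubly symmetric binary source with $X \oplus Y \sim \Bern(p)$, the relevant conditional rate-distortion function is $R_{X|Y}(D_2) = H(p) - H(D_2)$ for $D_2 \le p$, since conditioned on $Y$ the source $X$ looks like a $\Bern(p)$ variable to be compressed under Hamming distortion. This gives $R \ge H(p) - H(D_2)$. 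By the symmetric roles of $X$ and $Y$, the bound for decoder~1 yields $R \ge H(p) - H(D_1)$. Taking the maximum establishes the converse.

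For achievability I would use the modulo-sum interpretation of \eqref{eqn:com_del}, paralleling the Quadratic Gaussian proof and Remark~4.3. First, code separately for each decoder using only the relevant source: for decoder~1 (reconstructing $Y$ with side information $X$), build a point-to-point rate-distortion code for the $\Bern(1/2)$ source $Y$ at distortion $D_1$, producing a codeword $Y^n \oplus E_1^n$ with $E_1^n \sim \Bern(D_1)$, and bin its index to obtain $I_Y$ at rate $H(p) - H(D_1)$; symmetrically produce $I_X$ at rate $H(p)-H(D_2)$ for decoder~2. The encoder transmits $I_X \oplus I_Y$ at rate $\max\{H(p)-H(D_1), H(p)-H(D_2)\}$. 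Since decoder~1 holds $X^n$ as side information, it can recover $I_X$ (the bin containing a reconstruction of $X$ jointly typical with its own $X^n$), hence extract $I_Y$ from the modulo sum and decode its target; decoder~2 proceeds symmetrically.

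The main obstacle is making the binning rates in the achievability scheme rigorous: one must verify that, given the side information, each decoder can uniquely identify its \emph{own} index from the joint bin, so that the union $I_X \oplus I_Y$ suffices. Concretely, the reconstruction codebook for $Y$ at rate $I(U;Y) = 1 - H(D_1)$ is binned down by $I(U;X) = 1 - H(p)$ (since $U - Y - X$), leaving a net transmitted rate of $I(U;Y) - I(U;X) = H(p) - H(D_1)$; one checks that $I(U;Y|X) = H(p) - H(D_1)$ and $I(U;X|Y) = 0$ under the choice $U = Y \oplus E_1$ with $E_1 \perp Y$, so that the $\max$ in \eqref{eqn:com_del} is governed by the binning-feasibility condition at the side-information decoder. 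The only genuine subtlety is confirming that decoder~2's side information $Y^n$ lets it recover $I_X$ losslessly (a Slepian--Wolf step), so that the modulo-sum transmission is decodable at both terminals simultaneously; the distortion constraints then follow immediately from the marginal construction of each point-to-point code.
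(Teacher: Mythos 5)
Your converse is fine and coincides with the paper's: the cutset bound for each decoder reduces to a conditional rate--distortion function (the complementary-delivery encoder sees both sources, so giving it the side information costs nothing), and $R_{X|Y}(D_2)=H(p)-H(D_2)$, $R_{Y|X}(D_1)=H(p)-H(D_1)$ for $D_1,D_2\le p$. The achievability, however, contains a genuine error. With your test channel $U=Y\oplus E_1$, $E_1\sim\Bern(D_1)$ independent of $(X,Y)$, we have $U\oplus X=E_1\oplus Z$ where $Z=X\oplus Y\sim\Bern(p)$, so $H(U|X)=H(p\ast D_1)$ with $p\ast D_1:=p(1-D_1)+(1-p)D_1$, and hence $I(U;X)=1-H(p\ast D_1)$, \emph{not} $1-H(p)$ as you assert; the Markov chain $U-Y-X$ gives only $I(U;X)\le I(X;Y)$ by data processing, with strict inequality whenever $D_1>0$ and $p<1/2$. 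Consequently $I(U;Y|X)=H(p\ast D_1)-H(D_1)>H(p)-H(D_1)$. If you bin the $2^{n(1-H(D_1))}$ codewords down to only $2^{n(H(p)-H(D_1))}$ bins, each bin contains about $2^{n(1-H(p))}$ codewords, which exceeds the $2^{nI(U;X)}$ that decoder 1's side information can disambiguate, so the Wyner--Ziv decoding step fails. This is precisely the well-known rate loss of the binary symmetric Wyner--Ziv problem: unlike the quadratic Gaussian case, encoder side information is \emph{not} redundant here, so the recipe from the Gaussian proposition --- two independent Wyner--Ziv codes that ignore the encoder's knowledge of the other source, glued by a modulo sum of bin indices --- cannot reach the cutset bound, and no choice of $U$ in that scheme will repair it.

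The paper's proof avoids this by taking the modulo sum of the \emph{sources} rather than of the indices, exploiting the fact that the encoder observes both $X^n$ and $Y^n$: assume $D_1\le D_2$, form $Z^n=X^n\oplus Y^n$, which is i.i.d.\ $\Bern(p)$, and compress it with an ordinary point-to-point rate--distortion code at distortion $D_1$, costing rate $H(p)-H(D_1)=\max\{H(p)-H(D_1),\,H(p)-H(D_2)\}$. Decoder 1 sets $\Yh_i=X_i\oplus\Zh_i$ and decoder 2 sets $\Xh_i=Y_i\oplus\Zh_i$; in both cases the Hamming error equals $Z_i\oplus\Zh_i$, so both decoders inherit distortion at most $D_1\le D_2$. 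Your overall architecture (cutset converse plus a modulo-sum achievability) is the right one, but the object being XORed must be the source pair, not a pair of Wyner--Ziv bin indices.
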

\begin{proof}
The converse again follows from straightforward cutset bounds by considering decoders 1 and 2 individually. For the achievability scheme, let $Z = X\oplus Y$ and assume that $D_1 \le D_2$. Since $Z$ is i.i.d. $\Bern(p)$, using a point to point code for $Z$ at distortion $D_1$, we obtain a rate of $H(p) - H(D_1)$. Denote the reconstruction for $Z$ at time $i$ by $\Zh_i$. Decoder 1 reconstructs $Y_i$ by $\Yh_i = X_i \oplus \Zh_i$ for $i \in [1:n]$. Similarly, decoder 2 reconstructs $X$ by $\Xh_i = Y_i \oplus \Zh_i$ for $i \in [1:n]$. To verify that the distortion constraint holds, note that $d_1(Y_i, X_i \oplus \Zh_i) = Y_i \oplus X_i\oplus \Zh_i = Z_i \oplus \Zh_i$. Since $\Zh$ is a code that achieves distortion $D_1$, $\Yh$ satisfies the distortion constraint for decoder 1. The same analysis holds for decoder 2.
\end{proof}
\textit{Remark 4.5:} In this case, we only need a good code for the standard point to point rate distortion problem for a binary source. A good rate distortion code for a binary source is also a good code for the doubly symmetric binary source with Hamming distortion complementary delivery problem. 

\noindent \textit{Remark 4.6:} In our scheme, the reconstruction symbols at time
$i$ depend only on the received message and the side information at the decoder
at time $i$. Therefore, for this case, the rate distortion region for causal
reconstruction~\cite{Weissman--El-Gamal06} is the same as the rate distortion
region for noncausal reconstruction.

\section{Actions taken at the encoder} \label{sect:5}
We now turn to the case where the encoder takes action (figure~\ref{fig:3}) instead of the decoders. When the actions are taken at the encoder, the general rate-cost-distortion tradeoff region is open even for the case of a single decoder. Special cases which have been characterized includes the lossless case \cite{Permuter2010a}. In this section, we consider a special case of lossless source coding with $K$ decoders in which we can characterize the rate-cost tradeoff region.

\begin{theorem}~\label{thm:enc}
\textit{Special case of lossless source coding with actions taken at the encoder.} Let the action channel be given by the conditional distribution $P_{Y_1, Y_2, \ldots, Y_K|X,A}$. Assume further that $A = f_1(Y_1) = f_2(Y_2)= \ldots, f_K(Y_K)$. Then, the minimum rate required for lossless source coding with actions taken at the encoder and cost constraint $C$ is given by
\begin{align*}
R &= \min [\max_{j \in [1:K]} \left \{H(X|Y_j, A) \right\} -H(A|X)]^+,
\end{align*}
where minimization is over the joint distribution $p(x)p(a|x)p(y_1, y_2, \ldots,
y_K|x,a)$ such that $\E \Lambda(A) \le C$.
\end{theorem}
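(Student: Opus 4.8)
\textbf{Proof plan for Theorem~\ref{thm:enc}.}

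The plan is to establish matching achievability and converse bounds for the rate
\[
R = \min \left[\max_{j \in [1:K]} \left\{H(X|Y_j, A)\right\} - H(A|X)\right]^+,
\]
exploiting the special structure $A = f_j(Y_j)$ for every $j$, which says that each decoder can reconstruct the action sequence from its own side information \emph{without} the encoder spending rate to describe it. The comparison with Theorem~\ref{thm:1} is instructive: there the encoder (at the decoders' side) had to pay $I(X;A)$ to convey the action; here the encoder generates $A^n$ as a function of $X^n$, and since each decoder recovers $A^n$ for free from $Y_j^n$, the net saving is $H(A|X)$ relative to a naive scheme, explaining the subtracted term.

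For achievability I would proceed as follows. The encoder, observing $x^n$, generates the action sequence $a^n$ via an appropriate (possibly stochastic through a jointly typical selection) map so that $(x^n, a^n)$ is jointly typical with the chosen $p(x)p(a|x)$; this automatically meets the cost constraint $\E\Lambda(A)\le C$. The key coding step is a binning argument in which the set of $x^n$ sequences is binned into $2^{n(R+\e)}$ bins with $R = \max_j\{H(X|Y_j,A)\} - H(A|X)$. The encoder transmits only the bin index of $x^n$. At decoder $j$, the side information $Y_j^n$ first yields $a^n = f_j^{(n)}(Y_j^n)$ deterministically, so the decoder knows both $Y_j^n$ and $A^n$; it then searches its bin for the unique $x^n$ jointly typical with $(Y_j^n, a^n)$. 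The error analysis is standard Slepian--Wolf-type: decoder $j$ succeeds with high probability provided the bin rate exceeds $H(X|Y_j,A)$ \emph{after accounting} for the fact that knowing $A^n$ is equivalent to a rate reduction of $H(A|X)$ on the description of $X^n$ (since $A^n$ is itself a function of $X^n$ and carries $H(A|X)$ bits of residual information about $X^n$ given the source distribution). Taking the maximum over $j$ and the $[\cdot]^+$ to handle the degenerate case where no positive rate is needed completes achievability.

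For the converse, I would mimic the chain-of-inequalities structure used in the converse of Theorem~\ref{thm:1} but insert the crucial observation that $A^n$ is decodable from $Y_j^n$ at \emph{every} decoder. Starting from $nR \ge H(M) \ge I(M;X^n)$, I would condition appropriately and use $H(M|A^n,Y_j^n) = H(M|Y_j^n)$ (since $A^n$ is a function of $Y_j^n$), the lossless decoding assumption to invoke Fano's inequality ($H(X^n|M,Y_j^n)\le n\e_n$), and the fact that $A^n$ is a function of $X^n$. Carefully tracking the entropy terms should produce, for each $j$,
\[
nR \ge H(X^n|Y_j^n,A^n) - H(A^n|X^n) - n\e_n,
\]
which single-letterizes via a time-sharing variable $Q$ and the DMC property of the action channel to $R \ge H(X|Y_j,A) - H(A|X)$; taking the intersection over $j$ and using $R\ge 0$ gives the $\max$ with the $[\cdot]^+$. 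The main obstacle I anticipate is the converse bookkeeping: unlike Theorem~\ref{thm:1}, the term $H(A|X)$ must be extracted as a \emph{credit} rather than a cost, so I must be careful to use the double functional dependence ($A^n = f_j^{(n)}(Y_j^n)$ and $A^n$ a function of $X^n$) in exactly the right places, and to verify that the resulting single-letter expression is optimized over the same constrained distribution class $p(x)p(a|x)p(y_1,\ldots,y_K|x,a)$ with $\E\Lambda(A)\le C$. Showing that no auxiliary random variable is needed (so the bound is genuinely single-letter in $(X,A,Y_j)$) is the subtle point that makes the matching achievability clean.
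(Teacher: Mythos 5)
Your converse plan is sound and is essentially the paper's own argument: Fano's inequality gives $nR \ge H(X^n|Y_j^n) - n\epsilon_n$, and since $A^n$ is a deterministic function of $X^n$ one writes $H(X^n|Y_j^n) = H(X^n) + H(Y_j^n|X^n,A^n) - H(Y_j^n)$, which single-letterizes (i.i.d. source, memoryless action channel, subadditivity, time-sharing $Q$) to $H(X) + H(Y_j|X,A) - H(Y_j) = H(X|Y_j,A) - H(A|X)$, the final identity using $I(Y_j;A) = H(A)$ from $A = f_j(Y_j)$. One caveat on your framing: the block-level "credit" $H(A^n|X^n)$ in your displayed inequality is identically zero for any code (the action encoder is deterministic), and likewise $H(X^n|Y_j^n,A^n) = H(X^n|Y_j^n)$ since $A^n$ is a function of $Y_j^n$; the term $H(A|X)$ emerges only after single-letterization through the identity above, not as a block-level subtraction.

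The genuine gap is in achievability. Your scheme --- bin the $x^n$ sequences at rate $\max_j H(X|Y_j,A) - H(A|X)$, choose $a^n$ by plain joint-typicality selection, and let decoder $j$ recover $a^n$ from $Y_j^n$ and search its bin --- cannot achieve the claimed rate. Precisely because $A = f_j(Y_j)$, the decoder's knowledge of $A^n$ is redundant given $Y_j^n$ (indeed $H(X|Y_j,A) = H(X|Y_j)$), so the Slepian--Wolf error analysis forces the bin rate to exceed $\max_j H(X|Y_j,A)$ with no discount whatsoever; your "accounting" sentence claiming a rate reduction of $H(A|X)$ has no coding mechanism behind it when the action map is a generic typicality selection. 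The missing idea --- the crux of the paper's proof --- is that the encoder must \emph{embed information in the action sequence itself}. The paper splits the bin index $m_x$ (of full rate $\max_j H(X|Y_j,A)$) into $m_{xr}$ of rate $\max_j H(X|Y_j,A) - H(A|X)$, sent over the noiseless link, and $m_{xa}$ of rate $H(A|X)$, transmitted \emph{through the action channel} by Gel'fand--Pinsker coding, treating $X^n$ as a channel state known noncausally at the transmitter and $A$ as the channel input. The hypothesis $A = f_j(Y_j)$ guarantees that \emph{every} decoder can decode $m_{xa}$, since the Gel'fand--Pinsker rate $I(A;Y_j) - I(A;X) = H(A) - I(A;X) = H(A|X)$ is the same for all $j$; each decoder then reconstitutes $m_x$ and performs Slepian--Wolf decoding inside a bin of the full rate. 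Here $H(A|X)$ is not "residual information about $X^n$ carried by $A^n$" (that is zero given $Y_j^n$); it is the encoder's residual freedom in choosing $A$ given $X$, which is exactly what Gel'fand--Pinsker coding converts into communicated bits. Without this information-embedding step your claimed rate is unachievable whenever $H(A|X) > 0$. (The degenerate case $R=0$ behind the $[\cdot]^+$ is also handled separately in the paper by sending the entire message through the action channel.)
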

\begin{proof}

\noindent \textit{Converse}
The proof of converse is a straightforward extension from the single decoder case given in \cite{Permuter2010a}. We give the proof here for completeness. Consider the rate required for decoder $j$.
\begin{align*}
nR &\ge H(M) \\
&\ge H(M, X^n|Y_j^n) - H(X^n|M, Y_j^n) \\
& \ge H(M, X^n|Y_j^n) - n\e_n \\
& \stackrel{(a)}{=}  H(X^n|Y_j^n) - n\e_n \\
& \stackrel{(b)}{=} H(X^n) + H(Y_j^n|X^n, A^n) - H(Y_j^n) -n\e_n \\
&\ge \sum_{i=1}^n (H(X_i) + H(Y_{ji}| X_i, A_i) - H(Y_{j,i})) -n\e_n,
\end{align*}
where $(a)$ follows from the fact that $M$ is a function of $X^n$ and $(b)$ follows from $A^n$ being a function of $X^n$. The last step follows from $X^n$ being a discrete memoryless source; the action channel being memoryless and conditioning reduces entropy. As before, we define $Q$ to be an uniform random variable over $[1:n]$ independent of all other random variables to obtain
\begin{align*}
R &\ge H(X) + H(Y_{j}| X, A) - H(Y_{j}) - \e_n \\
& = H(X) + H(Y_{j}, X| A) - H(X|A) - H(Y_{j}) - \e_n \\
& = H( X| A, Y_{j}) + I(X;A) - I(Y_{j};A) - \e_n \\
& = H( X| A, Y_{j}) - H(A|X) - \e_n.
\end{align*}
The last step follows from the fact that $A = fj(Y_j)$. Combining the lower bounds over $K$ decoders then give us the achievable rate stated in the Theorem. 

\noindent \textit{Achievability}
We give a sketch of achievability since the techniques used are relatively straightforward. Assume first that $R >0$. We first bin the set of $X^n$ sequences to $2^{n(\max_{j\in [1:K]} H(Y_j|X,A) + \e)}$, $\Bc(M_X)$, $M_X \in [1:2^{n(\max_{j\in [1:K] H(Y_j}|X,A) + \e)}]$. Given an $x^n$ sequence, we first find the bin index $m_x$ such that $x^n \in \Bc(m_x)$. We then split $m_x$ into two sub-messages: $m_{xr} \in [1:2^{\max_{j \in [1:K]} \left \{H(X|Y_j, A) \right\} -H(A|X) + 2\e}]$ and $m_{xa} \in [1:2^{n(H(A|X) -\e)}]$. $m_{xr}$ is transmitted over the noiseless link, giving us the rate stated in the Theorem. As for $m_{xa}$, we will send the message through the action channel by treating the action channel as a channel with i.i.d. state $X$ noncausally known at the transmitter ($A$). We can therefore use Gel'fand Pinsker coding \cite{GP1980} for this channel.

Each decoder first decodes $m_{xa}$ from their side information $Y_j$. From the condition that $A = f_j(Y_j)$ for all $j$, we have $H(A|X) - \e = I(Y_j;A)- I(X;A) -\e$. From analysis of Gel'fand-Pinsker coding, since $|\mathcal{M}_{xa}| =I(Y_j;A)- I(X;A) -\e$, the probability of error in decoding $m_{xa}$ goes to zero as $n \to \infty$. The decoder then reconstructs $m_x$ from $m_{xr}$ and $m_{xa}$. It then finds the unique $\xh^n \in \Bc(m_x)$ that is jointly typical with $Y_j^n$ and $A^n$. Note that due to Gel'fand-Pinsker coding, the true $x^n$ sequence is jointly typical with $Y_j^n$ and $A^n$ with high probability. Therefore, the probability of error in this decoding step goes to zero as $n \to \infty$ since we have $2^{n(\max_{j\in [1:K]} H(Y_j|X,A) + \e)}$ bins. 

For the case where $R=0$, we send the entire message through the action channel. 
\end{proof}
\textit{Example 3:} Consider the case of $K = 2$ with switching dependent side information: $A = \{1,2\}$ and $(X,Y)\sim p(x,y)$ with $P_{Y_1, Y_2|X,A}$ specified by $Y_1 = Y, Y_2 = e$ when $A = 1$ and $Y_1 = e, Y_2 = Y$ when $A =2$. Note that $A$ is a function of $Y_1$, and also of $Y_2$. It therefore satisfies the condition in Theorem~\ref{thm:enc}. Let $\P(A = 1) = \alpha$, $\Lambda(A=1) =C_1$ and $\Lambda(A=2) = C_2$. The rate-cost tradeoff is characterized by 
\begin{align*}
R &= \max\{\alpha H(X|A=1, Y) + (1-\alpha)H(X|A=1), (1-\alpha) H(X|A=2, Y) + \alpha H(X|A=2)\} \\
& \qquad + H(X) - H_2(\alpha) -\alpha H(X|A=1) - (1-\alpha)H(X|A=2)
\end{align*} 
for some $p(a|x)$ satisfying $\alpha C_1 + (1-\alpha)C_2 \le C$. 
\section{Other settings} \label{sect:6}
In this section, we consider other settings involving multi-terminal source coding with action dependent side information. The first setting that we consider in this section generalizes \cite[Theorem 7]{Permuter2010a} to the case where there is a rate-limited link from the source encoder to the action encoder. The second setting we consider is a case of successive refinement with actions.

\subsection{Single decoder with Markov Form X-A-Y and rate limited link to action encoder}
In this subsection, we consider the setting illustrated in Figure \ref{fig:os1}.
Here, 
we have a single decoder with actions taken at an action encoder. The source
encoder have access to source $X^n$ and sends out two indices $M \in [1:2^{nR}]$
and $M_A \in [1:2^{nR_A}]$. The action encoder is a function $f: M_A \to A^n$.
In addition, we have the Markov relation $X-A-Y$. That is, the side information
$Y$ is dictated only by the action $A$ taken. The other
definitions remain the same and we omit them here.

\begin{figure}[!h]
\psfrag{x}[c]{$X^n$}
\psfrag{sc}[c]{Source Encoder}
\psfrag{ac}[c]{Action Encoder}
\psfrag{r}[c]{$R$}
\psfrag{ra}[c]{$R_A$}
\psfrag{y}[c]{$Y^n$}
\psfrag{dec}[c]{Decoder}
\psfrag{pyxa}[c]{$\P_{Y|A}$}
\psfrag{xh}[c]{$\Xh^n$}
\psfrag{a}[c]{$A^n$}
\begin{center}
\scalebox{0.85}{\includegraphics{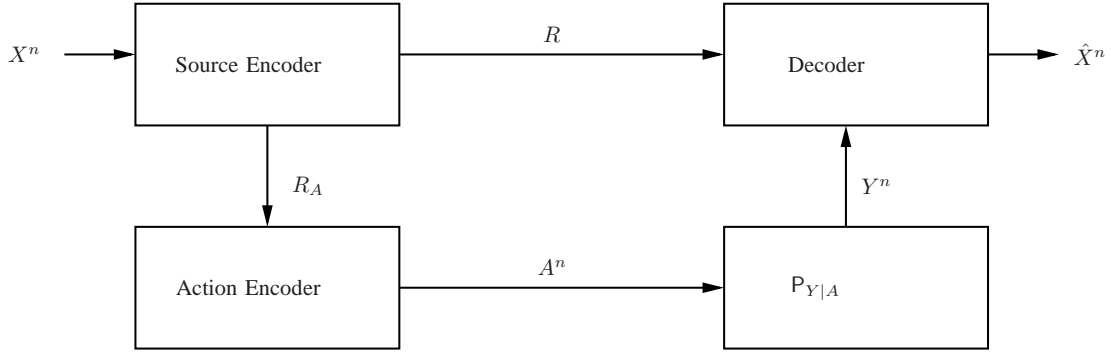}} 
\end{center}
\caption{Lossy source coding with rate limited link to action encoder} \label{fig:os1}
\vspace{-5pt}
\end{figure}
 
\begin{proposition}\label{prop:rlimit}
$R(D,C)$ for the setting shown in figure \ref{fig:os1} is given by
\begin{align*}
R(D,C) = \min \max\{I(X;\Xh) -R_A, I(X;\Xh) - I(A;Y)\},
\end{align*}
where the minimization is over $p(x)p(a)p(y|a)p(\xh|x)$ satisfying the cost and
distortion constraints $\E d(X,\Xh) \le D$ and $\E \Lambda(A) \le C$. 
\end{proposition}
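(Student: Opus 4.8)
The plan is to establish matching achievability and converse bounds, guided by the operational picture that the rate-$R_A$ link lets the source encoder steer the action sequence $A^n$, which the decoder observes only through the memoryless channel $p(y|a)$; the amount of description that can be routed this way is therefore capped both by the link, $R_A$, and by the channel, $I(A;Y)$, i.e.\ by $\min\{R_A, I(A;Y)\}$. Accordingly I will read the target rate as $\max\{I(X;\Xh)-R_A,\,I(X;\Xh)-I(A;Y)\}=I(X;\Xh)-\min\{R_A,I(A;Y)\}$.

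For achievability, fix a test channel $p(a)p(\xh|x)$ meeting the cost and distortion constraints. I generate a covering codebook of $2^{n(I(X;\Xh)+\e)}$ sequences $\Xh^n\sim\prod p(\xh)$ and, independently, a channel codebook of $2^{nr_A}$ action sequences $A^n\sim\prod p(a)$ with $r_A=\min\{R_A,I(A;Y)\}-\e$. Given $x^n$, the encoder covers it by some $\Xh^n$ with index $w$ (covering lemma), splits $w=(w_1,w_2)$ with $w_1$ of rate $r_A$, sends $w_2$ over the direct link (so $R=I(X;\Xh)-\min\{R_A,I(A;Y)\}+2\e$), and feeds the rate-$r_A$ index $w_1$ (which fits since $r_A\le R_A$) to the action encoder, which transmits $A^n(w_1)$. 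The decoder recovers $w_1$ from $Y^n$ by channel decoding, reliable because there are at most $2^{n(I(A;Y)-\e)}$ action codewords (packing lemma), combines it with $w_2$ to obtain $\Xh^n$, and meets distortion $D$ by the typical-average lemma, while the empirical cost concentrates around $\E\Lambda(A)\le C$. The point to emphasize is that no Wyner--Ziv binning gain is available: under the test channel $Y$ is independent of $(X,\Xh)$, so the whole index $I(X;\Xh)$ must be conveyed and the channel merely carries a piece of it. Minimizing over test channels and sending $\e\to 0$ gives the claimed rate (and when $I(X;\Xh)\le\min\{R_A,I(A;Y)\}$ the whole index fits through the action channel, so $R=0$ suffices).

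For the converse, fix a good code and use the functional relations $M,M_A=\phi(X^n)$, $A^n=f(M_A)$, $\Xh^n=g(M,Y^n)$, together with the memoryless channel, which yield the Markov chain $X^n-M_A-A^n-Y^n$ and the conditional independence $Y^n\perp(X^n,M)\mid A^n$. The engine of the argument is the single quantity $I(X^n;Y^n\mid M)$, which I bound in two ways. Using $I(X^n;Y^n\mid M)\le I(A^n;Y^n\mid M)$ (from $I(X^n;Y^n\mid A^n,M)=0$), I get, on one hand, $I(A^n;Y^n\mid M)\le H(A^n)\le H(M_A)\le nR_A$, and on the other hand $I(A^n;Y^n\mid M)\le I(A^n;Y^n)\le\sum_i I(A_i;Y_i)$ by the memoryless channel. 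Combining with $I(X^n;\Xh^n)\le I(X^n;M)+I(X^n;Y^n\mid M)\le nR+I(X^n;Y^n\mid M)$ and the standard single-letter bound $I(X^n;\Xh^n)\ge\sum_i I(X_i;\Xh_i)$, I obtain the two inequalities $nR\ge\sum_i I(X_i;\Xh_i)-nR_A$ and $nR\ge\sum_i\big(I(X_i;\Xh_i)-I(A_i;Y_i)\big)$.

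It remains to single-letterize and to exhibit one product-form test channel satisfying both bounds at once. Introducing a uniform time-sharing variable $Q$ and setting $X=X_Q$, $\Xh=\Xh_Q$, $A=A_Q$, $Y=Y_Q$, the bounds read $R\ge I(X;\Xh\mid Q)-R_A$ and $R\ge I(X;\Xh\mid Q)-I(A;Y\mid Q)$. I will use $X_Q\perp Q$ to get $I(X;\Xh\mid Q)\ge I(X;\Xh)$, and the fact that the channel is identical in every coordinate, so that $Y\perp Q\mid A$ and hence $I(A;Y\mid Q)\le I(A;Y)$, to strip the conditioning in the rate-decreasing direction in each term. Since the distortion depends only on the averaged marginal $p(x,\xh)$, and the cost and $I(A;Y)$ only on the averaged marginal $p(a,y)$ (whose conditional is exactly the channel $p(y|a)$), the objective decouples, and taking the product $p(x)p(a)p(y|a)p(\xh|x)$ of these marginals gives a legitimate test channel with $X\perp A$ obeying $\E d\le D$ and $\E\Lambda\le C$ and meeting both arms of the max. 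I expect this last bookkeeping to be the main obstacle: ensuring that both inequalities can be read off a \emph{single} distribution of the prescribed product form, and that the conditioning on $Q$ is discarded in the correct direction in each of the two terms; once that is in place, $R\ge\max\{I(X;\Xh)-R_A,\,I(X;\Xh)-I(A;Y)\}\ge R(D,C)$ closes the converse.
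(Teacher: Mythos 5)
Your proposal is correct and takes essentially the same route as the paper: the identical index-splitting achievability that routes $\min\{R_A, I(A;Y)\}$ bits of the covering index through the action channel as a channel code, and a converse that caps the information carried by the action path by both $nR_A$ and $\sum_i I(A_i;Y_i)$ via the Markov structure $X^n - A^n - Y^n$, followed by the same reduction to a product distribution $p(x)p(a)p(y|a)p(\xh|x)$ since every relevant functional depends only on the $(X,\Xh)$ and $(A,Y)$ marginals. The only cosmetic difference is that you work with $I(X^n;Y^n\mid M)$ where the paper subtracts $I(X^n;Y^n)$ unconditioned; if anything, your bookkeeping is the cleaner of the two.
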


\noindent \textit{Remark 6.1:} If we set $R_A = \infty$ in Proposition \ref{prop:rlimit}, then we recover the result in \cite[Theorem 7]{Permuter2010a}. Essentially, the source encoder tries to send as much information as possible through the rate limited action link until the link saturates.
\begin{proof}

\noindent \textit{Achievability:} The achievability is straightforward. Using standard rate distortion coding, we cover $X^n$ with $2^{n(I(X;\Xh)+\e)}$ $\Xh^n$ codewords. Given a source sequence $x^n$, we find an $\Xh^n$ that is jointly typical with $x^n$. We then split the index $M_X$ corresponding to the chosen $\Xh^n$ codeword into two parts: $M_A \in [1:2^{n(\min\{R_A, I(A;Y)\} +\e)}]$ and $M \in [1:2^{nR}]$. The action encoder takes the index and transmit it through the action channel. Since the rate of $M_A$ is less than $I(A;Y) -\e$, the decoder can decode $M_A$ with high probability of success. It then combines $M_A$ with $M$ to obtain the index of the reconstruction codeword $\Xh^n$.

\noindent \textit{Converse}
Given a code that satisfy the distortion and cost constraints, we have
\begin{align*}
nR &\ge H(M) \\
&= I(X^n;M) \\
& \ge I(X^n;M) - I(X^n;Y^n)\\
& = I(X^n;\Xh^n) - I(X^n, M_A;Y^n) \\
& \stackrel{(a)}{\ge} \sum_{i=1}^n I(X_i;\Xh_i) - I(X^n, M_A, A^n;Y^n) \\
& \stackrel{(b)}{\ge} \sum_{i=1}^n I(X_i;\Xh_i) - I( M_A, A^n;Y^n).
\end{align*}
$(a)$ follows from the fact that $A^n$ is a function of $M_A$. $(b)$ follows from the Markov chain $X-A-Y$. Now, it is easy to see that $I(M_A, A^n;Y_i) \le \min\{nR_A, \sum_{i=1}^n I(A_i;Y_i)\}$. The bound on the rate is then single letterized in the usual manner, giving us
\begin{align*}
R(D,C) = \min \max\{I(X;\Xh) -R_A, I(X;\Xh) - I(A;Y)\},
\end{align*}
for some $p(a,\xh|x)$ satisfying the distortion and cost constraints. Finally, we note that $p(a, \xh|x)$ can be restricted to the form $p(a)p(\xh|x)$. To see this, note that none of the terms depend on the joint $p(a,\xh|x)$. Furthermore, due to the Markov conditon $X-A-Y$, it suffices to consider $A$ independent of $X$, giving us the p.m.f in the Proposition.
\end{proof}

\subsection{Successive refinement with actions}
The next setup that we consider is a case of successive refinement \cite{Equitz1993}, \cite{Rimoldi1994} with actions taken at the ``more capable'' decoder. The setting is shown in Figure \ref{fig:os2}.

\begin{figure}[!h]
\psfrag{x}[c]{$X^n$}
\psfrag{sc}[c]{Source Encoder}
\psfrag{ac}[c]{Action Encoder}
\psfrag{dec2}[c]{Decoder 2}
\psfrag{r}[c]{$R_1$}
\psfrag{y}[c]{$Y^n$}
\psfrag{dec}[c]{Decoder 1}
\psfrag{pyxa}[c]{$\P_{Y|X,A}$}
\psfrag{xh}[l]{$\Xh^n_1, D_1$}
\psfrag{xh2}[l]{$\Xh^n_2, D_2$}
\psfrag{a}[c]{$A^n$}
\psfrag{r1}[c]{$R_2$}
\begin{center}
\scalebox{0.85}{\includegraphics{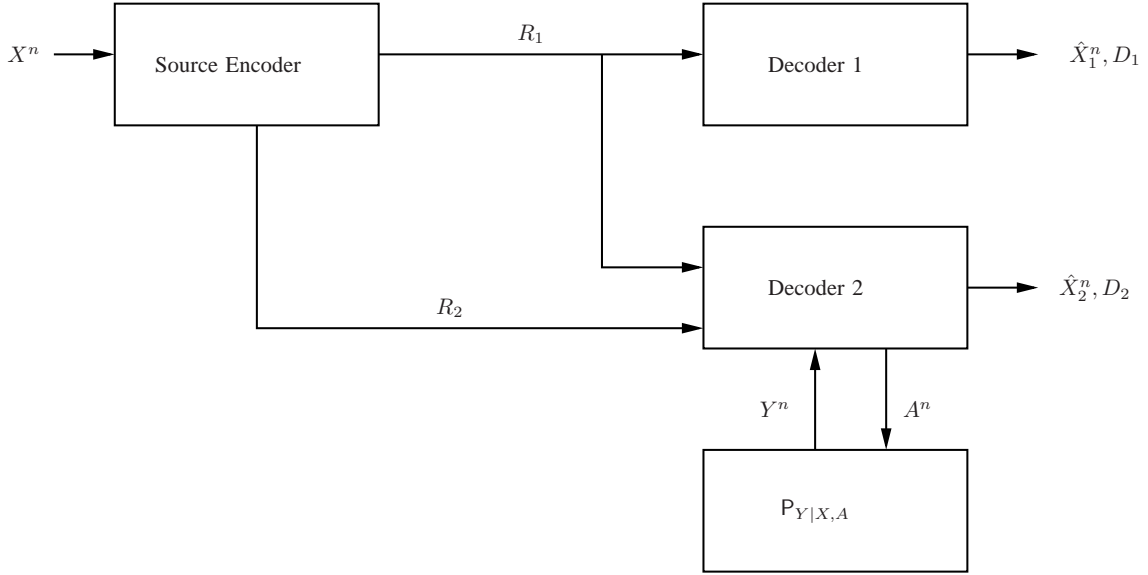}} 
\end{center}
\caption{Successive refinement with actions} \label{fig:os2}
\vspace{-5pt}
\end{figure}
 
\begin{proposition} \label{prop:sr}
\textit{Successive refinement with actions taken at the more capable decoder}
For the setting shown in figure \ref{fig:os2}, the rate distortion cost tradeoff region is given by
\begin{align*}
R_1 &\ge I(X;\Xh_1), \\
R_1+ R_2 &\ge I(X;\Xh_1, A) + I(X;U|\Xh_1, Y, A) 
\end{align*}
for some $p(\xh_1,a,u|x)$ satisfying
\begin{align*}
\E d_1(X, \Xh_1) \le D_1, \\
\E d_1(X, \Xh_2(U, Y, A)) \le D_2, \\
\E \Lambda(A) \le C.
\end{align*}
\end{proposition}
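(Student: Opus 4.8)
The plan is to establish the region of Proposition~\ref{prop:sr} by a matching achievability/converse pair, treating the scheme as a successive-refinement code whose base layer is an ordinary rate--distortion description of $\Xh_1$ (together with the action description of $A$), and whose refinement layer is a \emph{conditional} Wyner--Ziv code for the more capable decoder~2 that exploits the action-induced side information $Y$.

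For achievability I would generate three codebooks in superposition: an $\Xh_1^n$ codebook at rate $I(X;\Xh_1)+\e$, an $A^n$ codebook at conditional rate $I(X;A|\Xh_1)+\e$, and a $U^n$ codebook at rate $I(X;U|\Xh_1,A)+\e$, the last of which is then partitioned into $2^{n(I(X;U|\Xh_1,Y,A)+2\e)}$ Wyner--Ziv bins. Given $X^n$, the encoder finds a jointly typical tuple $(X^n,\Xh_1^n,A^n,U^n)$, which succeeds with high probability by the covering lemma since the cumulative rate equals $I(X;\Xh_1,A,U)$; it sends the $\Xh_1^n$ index as $M_1$ (so $R_1=I(X;\Xh_1)$) and the exact $A^n$ index together with the $U^n$ bin index as $M_2$ (so $R_2=I(X;A|\Xh_1)+I(X;U|\Xh_1,Y,A)$). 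Decoder~1 reads $\Xh_1^n$ from $M_1$; decoder~2 reads $\Xh_1^n$ and $A^n$, takes the action, receives $Y^n$, and recovers $U^n$ from its bin using $(\Xh_1^n,A^n,Y^n)$, the bin rate $I(X;U|\Xh_1,Y,A)$ being sufficient because the chain $U-(X,\Xh_1,A)-Y$ gives $I(U;X|\Xh_1,A)-I(U;Y|\Xh_1,A)=I(X;U|\Xh_1,Y,A)$. The reconstructions $\Xh_1$ and $\Xh_2(U,Y,A)$ meet $D_1,D_2$, the cost meets $C$, and summing yields $R_1+R_2=I(X;\Xh_1,A)+I(X;U|\Xh_1,Y,A)$.

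For the converse, $R_1\ge I(X;\Xh_1)$ is standard: $nR_1\ge H(M_1)\ge I(X^n;\Xh_1^n)\ge\sum_i I(X_i;\Xh_{1i})$, single-letterized with a time-sharing variable. For the sum rate, let $M=(M_1,M_2)$ and note that $\Xh_1^n$ and $A^n$ are functions of $M$, so
\begin{align*}
n(R_1+R_2)\ge H(M)=I(X^n;M) &= I(X^n;\Xh_1^n,A^n)+I(X^n;M\mid\Xh_1^n,A^n).
\end{align*}
The first term is $\ge\sum_i I(X_i;\Xh_{1i},A_i)$ by memorylessness and conditioning, giving $I(X;\Xh_1,A)$. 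The second term is the conditional Wyner--Ziv rate for decoder~2, who already knows $(\Xh_1^n,A^n)$ and holds $Y^n$; I would single-letterize it to $\sum_i I(X_i;U_i\mid\Xh_{1i},Y_i,A_i)$ following the pattern of the converse to Proposition~\ref{prop:HB}, choosing $U_i=(M,Y^{n\setminus i})$ so that (i) $\Xh_{1i},A_i$ are functions of $U_i$, (ii) the reconstruction $\Xh_{2i}=g_{2i}(M,A^n,Y^n)$ is a function of $(U_i,Y_i,A_i)$, and (iii) the memorylessness of the action channel delivers the Markov chain $Y_i-(X_i,A_i)-(U_i,\Xh_{1i})$, hence the factorization $p(\xh_1,a,u|x)p(y|x,a)$. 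Introducing $Q$ uniform on $[1:n]$ and $U=(U_Q,Q)$ then yields the stated region after the routine verification of the cost and distortion constraints.

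The main obstacle is precisely the single-letterization of $I(X^n;M\mid\Xh_1^n,A^n)$ into the Wyner--Ziv form $\sum_i I(X_i;U_i\mid\Xh_{1i},Y_i,A_i)$: the naive ``conditioning reduces entropy'' manipulations do not recover the side-information reduction in the correct direction, so one must invoke the Csiszár sum identity to trade the future terms $\sum_i I(X_i;Y_{i+1}^n\mid\cdots)$ against $\sum_i I(X_i;Y_i\mid\cdots,Y^{i-1})$ exactly as in Proposition~\ref{prop:HB}, all the while keeping $A$ inside every conditioning so that the resulting auxiliary respects $Y-(X,A)-(U,\Xh_1)$. Checking that this same $U_i$ simultaneously supports the decoder-2 reconstruction as a function of $(U,Y,A)$ is the remaining delicate point.
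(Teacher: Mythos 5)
Your achievability argument is essentially the paper's: the same superposition structure $\Xh_1 \to A \to U$ with Wyner--Ziv binning of the $U$-codebook at rate $I(X;U|\Xh_1,Y,A)$, the same rate accounting, and the same justification of the bin-decoding step via the Markov chain $Y-(X,A)-(U,\Xh_1)$. That part is correct.

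The converse has a genuine gap at exactly the step you flag, and the repair you propose (Csisz\'ar sum identity) is not the right one. After writing $n(R_1+R_2)\ge I(X^n;\Xh_1^n,A^n)+I(X^n;M|\Xh_1^n,A^n)$, you single-letterize the two terms \emph{separately}: the first to $\sum_i I(X_i;\Xh_{1i},A_i)$, the second (after inserting $Y^n$ into the conditioning) to $\sum_i I(X_i;U_i|\Xh_{1i},Y_i,A_i)$. The second reduction requires, inside the mutual information, the inequality $H(X^n|\Xh_1^n,A^n,Y^n)\ge\sum_i H(X_i|\Xh_{1i},A_i,Y_i)$, which is false in general: the chain rule plus conditioning gives exactly the reverse, $H(X^n|\Xh_1^n,A^n,Y^n)=\sum_i H(X_i|X^{i-1},\Xh_1^n,A^n,Y^n)\le\sum_i H(X_i|\Xh_{1i},A_i,Y_i)$. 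The usual Wyner--Ziv escape (that $(X_i,Y_i)$ is independent of the other coordinates) is unavailable here because $(\Xh_1^n,A^n)$ is a function of $M$, hence of the whole block $X^n$, so the triples $(X_i,A_i,Y_i)$ are coupled across time; and no amount of past/future trading via the Csisz\'ar sum identity fixes an inequality whose deficit sits in this term. Indeed, the converse of Proposition~\ref{prop:HB} that you cite does not use the Csisz\'ar sum for this purpose either (its future/past manipulation only constructs the second auxiliary $V_1$, which has no analogue here). What both that proof and the paper's proof of the present proposition actually do is refuse to spend the first term separately: they keep $I(X^n;\Xh_1^n,A^n)$ and $H(X^n|\Xh_1^n,A^n,Y^n)$ together, rewrite the sum as $H(X^n)-I(X^n;Y^n|\Xh_1^n,A^n)$, and bound $I(X^n;Y^n|\Xh_1^n,A^n)=H(Y^n|\Xh_1^n,A^n)-H(Y^n|X^n,\Xh_1^n,A^n)\le\sum_i I(X_i;Y_i|\Xh_{1i},A_i)$, which single-letterizes in the right direction because the memoryless action channel makes $H(Y^n|X^n,\Xh_1^n,A^n)=\sum_i H(Y_i|X_i,A_i,\Xh_{1i})$ exact while the $Y$-entropy carrying the conditioning-reduces-entropy step enters with a minus sign. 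Subtracting $H(X^n|M,\Xh_1^n,A^n,Y^n)\le\sum_i H(X_i|U_i,\Xh_{1i},A_i,Y_i)$ with your $U_i=(M,Y^{n\setminus i})$ then yields $\sum_i[I(X_i;\Xh_{1i},A_i)+I(X_i;U_i|\Xh_{1i},Y_i,A_i)]$. In short: your auxiliary choice, Markov check, and reconstruction check are all right, but the sum-rate bound must be derived for the combined expression; the slack that makes the argument work lives precisely in the first term, which your outline has already bounded away.
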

The cardinality of the auxiliary $U$ may be upper bounded by $|\Uc| \le |\Xc||\mathcal{\Xh}_1||\mathcal{A}|+1$. 

If we restrict $R_2 = 0$, then Proposition \ref{prop:sr} gives the rate-distortion-cost tradeoff region for a special case of Proposition \ref{prop:HB}. That is, the case when $Y_2 = \emptyset$ and actions are taken only at decoder 1. 
\begin{proof}

\noindent \textit{Achievability:} We give the case where $R_1 = I(X;\Xh_1) + \e$ and $R_2 = I(X;A|\Xh_1) + I(X;U|\Xh_1, Y, A) + 3\e $. The general region stated in the Proposition can then be obtained by rate splitting of $R_2$.
\subsection*{Codebook generation}
\begin{itemize}
\item Generate $2^{nR_1}$ $\Xh_1^n(m_1)$ sequences according to $\prod_{i=1}^n p(\xh_{1i})$, $m_1 \in [1:2^{nR}]$.
\item For each $\Xh_1^n(m_1)$ sequence, generate $2^{n(I(X;A|\Xh_1)+\e)}$ $A^n(m_1, m_{21})$, sequences according to $\prod_{i=1}^n p(a_i|\xh_{1i})$.
\item For each $\Xh_1^n(m_1)$ and $A^n(m_1, m_2)$ sequence pair, generate $2^{n(I(X;U|\Xh_1,A)+\e)}$ $U^n(m_1, m_{21}, l_{22})$, sequences according to $\prod_{i=1}^n p(u_i|\xh_{1i}, a_i)$.
\item Partition the set of $l_{22}$ indices into $2^{I(X;U|\Xh_1, Y, A) + 2\e}$ bins, $\Bc (m_{1}, m_{21},m_{22})$, $m_{22} \in [1:2^{n(I(X;U|\Xh_1, Y, A)+2\e)}]$.
\end{itemize}
\subsection*{Encoding}
\begin{itemize}
\item Given a sequence $x^n$, the encoder first looks for an $\xh_1^n(m_1)$ sequence such that $(x^n, \xh_1^n) \in \aep$. This step succeeds with high probability since $R_1 =  I(X;\Xh_1) + \e$.
\item Next, the encoder looks for an $A^n(m_1, m_{21})$ sequence such that $(x^n, a^n, \xh_1^n) \in aep$. This step succeeds with high probability since we have $2^{n(I(X;A|\Xh_1)+\e)}$ $A^n$ sequences. 
\item The encoder then looks for an $U^n(m_1, m_{21}, l_{22})$ sequence such that $(x^n, a^n, \xh_1^n, u^n) \in aep$. This step succeeds with high probability since we have $2^{n(I(X;U|\Xh_1,A)+\e)}$ $U^n$ sequences. 
\item It then finds the bin index such that $l_{22} \in \Bc(m_1, m_{21}, m_{22})$. 
\item The encoder sends out the indices $m_1$ over the link $R_1$ and $m_{21}$ and $m_{22}$ over the link $R_2$, giving us the stated rates.
\end{itemize}
\subsection*{Decoding and reconstruction}
\begin{itemize}
\item Since decoder 1 has index $m_1$, it reconstructs $x^n$ using $\xh_1(m_1)^n$. Since $(x^n, \xh_1^n)$ are jointly typical with high probability, the expected distortion satisfies the $D_1$ distortion constraint to within $\e$.
\item For decoder 2, from $m_1$ and $m_{21}$, it recovers the action sequence $a^n(m_1, m_{21})$. It then takes the action $a^n(m_1, m_{21})$ to obtain it's side information $Y^n$. With the side information, it recovers the $u^n$ sequence by looking for the unique $\lh_{22} \in  \Bc(m_1, m_{21}, m_{22})$ such that $(u^n(m_{1}, m_{21}, \lh_{22}), \xh_1^n, a^n, Y^n) \in \aep$. Since there are only $2^{n(I(U;Y|\Xh_1, A) -\e)}$ $U^n$ sequences in the bin and $(u^n(m_{1}, m_{21}, l_{22}), \xh_1^n, a^n, Y^n) \in \aep$ with high probability from the fact that $Y$ is generated i.i.d. according to $p(y|a_i,x_i)$, the probability of error goes to zero as $n \to \infty$. Decoder 2 then reconstructs $x^n$ using $\xh_{2i}(a_i, u_i, y_i)$ for $i\in [1:n]$.
\end{itemize}

\noindent \textit{Converse:} We consider only the lower bound for $R_1 + R_2$. The lower bound for $R_1$ is straightforward.
Given a code which satisfies the distortion and cost constraints, we have
\begin{align*}
n(R_1 + R_2) &\ge H(M_1, M_2) \\
& = H(M_1, M_2, A^n, \Xh^n_1) \\
&= H(A^n, \Xh_1^n) +H(M_1, M_2|A^n, \Xh_1^n) \\
& \ge I(X^n;A^n, \Xh_1^n) + H(M_1, M_2|A^n, \Xh_1^n, Y^n) - H(M_1, M_2|Y^n, A^n, \Xh_1^n, X^n)\\
& = I(X^n;A^n, \Xh_1^n) + I(X^n; M_1, M_2|A^n, \Xh_1^n, Y^n) \\
& = I(X^n;A^n, \Xh_1^n) + H(X^n|A^n, \Xh_1^n, Y^n)- H( X^n|A^n, \Xh_1^n, Y^n, M_1,M_2) \\
& = I(X^n;A^n, \Xh_1^n) + H(X^n, Y^n|A^n, \Xh_1^n) - H(Y^n|\Xh_1^n, A^n)- H( X^n|A^n, \Xh_1^n, Y^n, M_1,M_2) \\
& = H(X^n) - H(X^n|A^n, \Xh_1^n) + H(X^n, Y^n|A^n, \Xh_1^n) - H(Y^n|\Xh_1^n, A^n)- H( X^n|A^n, \Xh_1^n, Y^n, M_1,M_2) \\
& = H(X^n) + H(Y^n|X^n, A^n, \Xh_1^n) - H(Y^n|\Xh_1^n, A^n)- H( X^n|A^n, \Xh_1^n, Y^n, M_1,M_2) \\
& \ge \sum_{i=1}^n(H(X_i) + H(Y_i|X^n, A^n, \Xh_1^n, Y^{i-1}) - H(Y_i|\Xh_1^n, A^n, Y^{i-1})- H( X_i|A^n, \Xh_1^n, Y^n, M_1,M_2)) \\
& \stackrel{(a)}{\ge} \sum_{i=1}^n(H(X_i) + H(Y_i|X_i, A_i, \Xh_{1i}) - H(Y_i|\Xh_{1i}, A_i)- H( X_i|A^n, \Xh_1^n, Y^n, M_1,M_2)) \\
& = \sum_{i=1}^n(H(X_i) + H(Y_i|X_i, A_i, \Xh_{1i}) - H(Y_i|\Xh_{1i}, A_i)- H( X_i|A^n, \Xh_1^n, Y^n, M_1,M_2)) \\
& \ge \sum_{i=1}^n(H(X_i) + H(Y_i|X_i, A_i, \Xh_{1i}) - H(Y_i|\Xh_{1i}, A_i)- H( X_i|U_i, A_i, Y_i, \Xh_{1i})) \\
\end{align*}
$(a)$ follows from the Markov Chain $(X^{n \backslash i}, A^{n\backslash i }, \Xh_1^{n\backslash i}, Y^{i-1}) - (\Xh_i,X_i,A_i) - Y_i$ and the last step follows from defining $U_i = (M_1, M_2, Y^{n \backslash i}, A^{n\backslash i})$. The proof is then completed in the usual manner by defining the time sharing uniform random variable $Q$ and $U = (U_Q, Q)$, giving us
\begin{align*}
R_1 + R_2  &\ge H(X) + H(Y|X, A, \Xh_1) - H(Y|\Xh_1, A) - H(X|U, A, Y, \Xh_1) \\
& = I(X;\Xh_1, A) + I(X;U|\Xh_1, Y, A).
\end{align*}
The fact that $\Xh_2$ is a function of $U$, $Y$ and $A$, which is straightforward. Finally, the cardinality bound on $U$ may be obtained from standard techniques. Note that we need $|\mathcal{\Xh}_1||\Xc||\mathcal{A}|-1$ letters to preserve $p(u,a,x)$ and two more to preserve the rate and distortion constraints.
\end{proof}

\noindent \textit{Remark 6.2}: An interesting question to explore characterizing the more general case when degraded side information is also available at decoder 1. That is, we have the side informations $Y_1$ at decoder 1 and $Y_2$ at decoder 2 are generated by a discrete memoryless channel $\P_{Y_1, Y_2|X,A}$ such that $(X,A) - (Y_2,A) - (Y_1,A)$. This generalized setup would allow us to generalize Proposition \ref{prop:HB} entirely and also leads to a generalization of successive refinement for the Wyner-Ziv problem in \cite{Steinberg--Merhav2004} to the action setting.   
\section{Conclusion} \label{sect:7}
In this paper, we considered an important class of multi-terminal source coding problems, where the encoder sends the description of the source to the decoders, which then take cost-constrained actions that affect the quality or availability of side information.  We computed the optimum rate region for lossless compression, while for the lossy case we provide a general achievability scheme that is shown to be optimal for a number of special cases, one of them being the generalization of \textit{Heegard-Berger-Kaspi} setting. (cf. \cite{Heegard--Berger1985}, \cite{Kaspi1994}). In all these cases in addition to a standard achievability argument, we also provided a simple scheme which has a \textit{modulo sum }interpretation. The problem where the encoder takes actions rather than the decoders, was also considered. Finally, we extended the scope to additional multi-terminal source coding problems such as successive refinement with actions. 

\section*{Acknowledgment}
We thank Professor Haim Permuter and Professor Yossef Steinberg for helpful discussions. The authors' research was partially supported by NSF grant CCF-0729195 and the Center for Science of Information (CSoI), an NSF Science and Technology Center, under grant agreement CCF-0939370; the Scott A. and Geraldine D. Macomber Stanford Graduate Fellowship and the Office of Technology Licensing Stanford Graduate Fellowship. 
\bibliographystyle{IEEEtran}
\bibliography{bibfile}

\begin{thebibliography}{10}
\providecommand{\url}[1]{#1}
\csname url@samestyle\endcsname
\providecommand{\newblock}{\relax}
\providecommand{\bibinfo}[2]{#2}
\providecommand{\BIBentrySTDinterwordspacing}{\spaceskip=0pt\relax}
\providecommand{\BIBentryALTinterwordstretchfactor}{4}
\providecommand{\BIBentryALTinterwordspacing}{\spaceskip=\fontdimen2\font plus
\BIBentryALTinterwordstretchfactor\fontdimen3\font minus
  \fontdimen4\font\relax}
\providecommand{\BIBforeignlanguage}[2]{{%
\expandafter\ifx\csname l@#1\endcsname\relax
\typeout{** WARNING: IEEEtran.bst: No hyphenation pattern has been}%
\typeout{** loaded for the language `#1'. Using the pattern for}%
\typeout{** the default language instead.}%
\else
\language=\csname l@#1\endcsname
\fi
#2}}
\providecommand{\BIBdecl}{\relax}
\BIBdecl

\bibitem{Wyner}
A.~D. Wyner and J.~Ziv, ``The rate-distortion function for source coding with
  side information at the decoder,'' \emph{{IEEE} Trans. Inf. Theory}, vol.~22,
  no.~1, pp. 1--10, 1976.

\bibitem{Permuter2010a}
H.~Permuter and T.~Weissman, ``Source coding with a side information ``vending
  machine'','' \emph{{IEEE} Trans. Inf. Theory}, vol.~57, no.~7, pp.
  4530--4544, July 2011.

\bibitem{AsnaniPermuterWeissman2010a}
H.~Asnani, H.~H. Permuter, and T.~Weissman, ``Probing capacity,'' \emph{CoRR},
  vol. abs/1010.1309, 2010.

\bibitem{Kittichokechai2010}
K.~Kittichokechai, T.~J. Oechtering, M.~Skoglund, and R.~Thobaben, ``Source and
  channel coding with action-dependent partially known two-sided state
  information,'' in \emph{Proc. IEEE International Symposium on Information
  Theory}, 2010, pp. 629--633--194.

\bibitem{El-Gamal--Kim2010}
A.~El~Gamal and Y.~H. Kim, ``Lectures on network information theory,'' 2010,
  available online at ArXiv.

\bibitem{Cover1975}
T.~M. cover, ``A proof of the data compression theorem of slepian and wolf for
  ergodic sources,'' \emph{IEEE Trans. Inf. Theory}, vol.~21, no.~2, pp.
  226--228, 1975.

\bibitem{Kimura2006}
A.~Kimura and T.~Uyematsu, ``Multiterminal source coding with complementary
  delivery,'' in \emph{Proc. International Symposium on Information Theory and
  its Applications (ISITA)}, 2006, pp. 189--194.

\bibitem{Weissman--El-Gamal06}
T.~Weissman and A.~El~Gamal, ``Source coding with limited-look-ahead side
  information at the decoder,'' \emph{{IEEE} Trans. Inf. Theory}, vol.~52,
  no.~12, pp. 5218--5239, 2006.

\bibitem{Eggleston}
H.~G. Eggleston, \emph{Convexity}.\hskip 1em plus 0.5em minus 0.4em\relax
  Cambridge: Cambridge University Press, 1958.

\bibitem{Heegard--Berger1985}
C.~Heegard and T.~Berger, ``Rate distortion when side information may be
  absent,'' \emph{{IEEE} Trans. Inf. Theory}, vol.~31, no.~6, pp. 727--734,
  1985.

\bibitem{Kaspi1994}
A.~H. Kaspi, ``Rate-distortion function when side-information may be present at
  the decoder,,'' \emph{{IEEE} Trans. Inf. Theory}, vol.~40, no.~6, pp.
  2031--2034, 1994.

\bibitem{Timo}
R.~Timo, A.~Grant, and G.~Kramer, ``Rate-distortion functions for source coding
  with complementary side information,'' in \emph{Proc. IEEE International
  Symposium on Information Theory}, St. Petersburg, Russia, 2011, pp.
  2934--2938.

\bibitem{GP1980}
S.~I. Gel'fand and M.~S. Pinsker, ``Coding for channels with random
  parameters,'' \emph{Probl. Control and Information Theory}, vol.~9, pp.
  19--31, 1980.

\bibitem{Equitz1993}
W.~H.~R. Equitz and T.~M. Cover, ``Successive refinement of information,''
  \emph{{IEEE} Trans. Inf. Theory}, vol.~37, no.~2, pp. 269--275, 1991.

\bibitem{Rimoldi1994}
------, ``Successive refinement of information: Characterization of the
  achievable rates,,'' \emph{{IEEE} Trans. Inf. Theory}, vol.~40, no.~1, pp.
  253--259, 1994.

\bibitem{Steinberg--Merhav2004}
Y.~Steinberg and N.~Merhav, ``On successive refinement for the wyner-ziv
  problem,'' \emph{{IEEE} Trans. Inf. Theory}, vol.~50, no.~8, pp. 1636--1654,
  2004.

\end{thebibliography}
\appendices
\section{Achievability Sketch for Theorem \ref{thm:2}} \label{appen1}
\subsection*{Codebook generation}
\begin{itemize}
\item Generate $2^{n(I(X;A) + \e)}$ $A^n(l_a)$, $l_a \in [1:2^{n(I(X;A) + \e)}]$, sequences according to $\prod_{i=1}^n p(a_i)$. 
\item For each $A^n$ sequence, generate $2^{n(I(U;X|A) + \e)}$ $U^n(l_a,l_0)$, $l_0 \in [1:2^{n(I(U;X|A) + \e)}]$, sequences according to $\prod_{i=1}^n p_{U|A}(u_i|a_i)$.
\item Partition the set of indices corresponding to the $U^n$ codewords uniformly to $2^{n(\max\left\{I(X;U|A, Y_1), I(X;U|A,Y_2)\right\} + 2\e)}$ bins, $\Bc_{U}(l_a, m_0)$, $m_0 \in [1:2^{n(\max\left\{I(X;U|A, Y_1), I(X;U|A,Y_2)\right\} + 2\e)}]$.
\item For each pair of $A^n$ and $U^n$ sequences, generate $2^{n(I(V_1;X|A,U) + \e)}$ $V_1^n(l_a, l_0, l_1)$, $l_1 \in [1: 2^{n(I(V_1;X|A,U) + \e)}]$, sequences according to $\prod_{i=1}^n p_{V_1|A,U}(v_{1i}|a_i, u_i)$.
\item Partition the set of indices corresponding to the $V^n_1$ codewords uniformly to $2^{n( I(X;V_1|U, A, Y_1) + 2\e)}$ bins, $\Bc_{V_1}(l_a, l_0, m_1)$, $m_1 \in [1:2^{n(I(X;V_1|U, A, Y_1) + 2\e)}]$.
\item For each pair of $A^n$ and $U^n$ sequences, generate $2^{n(I(V_2;X|A,U) + \e)}$ $V_2^n(l_a, l_0, l_2)$, $l_1 \in [1: 2^{n(I(V_2;X|A,U) + \e)}]$, sequences according to $\prod_{i=1}^n p_{V_2|A,U}(v_{2i}|a_i, u_i)$.
\item Partition the set of indices corresponding to the $V^n_2$ codewords uniformly to $2^{n( I(X;V_2|U, A, Y_2) + 2\e)}$ bins, $\Bc_{V_2}(l_a, l_0, m_2)$, $m_2 \in [1:2^{n(I(X;V_2|U, A, Y_2) + 2\e)}]$.
\end{itemize}
\subsection*{Encoding}
\begin{itemize}
\item Given an $x^n$ sequence, the encoder first looks for an $a^n(l_a)$ sequence such that $(x^n, a^n) \in \aep$.  If there is none, it outputs and index chosen uniformly at random from the set of possible $l_a$ indices. If there is more than one, it outputs an index chosen uniformly at random from the set of feasible indices. Since there are $2^{n(I(X;A) + \e)}$ such sequences, the probability of error $\to 0$ as $n \to \infty$.
\item The encoder then looks for a $u^n(l_a, l_0)$ sequence that is jointly typical with $(a^n(l_a), x^n)$. If there is none, it outputs and index chosen uniformly at random from the set of possible $l_0$ indices. If there is more than one, it outputs an index chosen uniformly at random from the set of feasible indices. Since there are $2^{n(I(U;X|A) + \e)}$ such sequences, the probability of error $\to 0$ as $n \to \infty$.
\item Next, the encoder looks for a $v_1^n(l_a, l_0, l_1)$ sequence that is jointly typical with $(a^n(l_a), u^n(l_0),  x^n)$. If there is none, it outputs and index chosen uniformly at random from the set of possible $l_0$ indices. If there is more than one, it outputs an index chosen uniformly at random from the set of feasible indices. Since there are $2^{n(I(V_1;X|A,U) + \e)}$ such sequences, the probability of error $\to 0$ as $n \to \infty$.
\item Next, the encoder looks for a $v_2^n(l_a, l_0, l_2)$ sequence that is jointly typical with $(a^n(l_a), u^n(l_0), x^n)$. If there is none, it outputs and index chosen uniformly at random from the set of possible $l_0$ indices. If there is more than one, it outputs an index chosen uniformly at random from the set of feasible indices. Since there are $2^{n(I(V_2;X|A,U) + \e)}$ such sequences, the probability of error $\to 0$ as $n \to \infty$.
\item The encoder then sends out the indices $l_a$, $m_0$, $m_1$ and $m_2$ such that $l_0 \in \Bc_U(l_a, m_0)$, $l_1 \in \Bc_{V_1}(l_a, l_0, m_1)$ and $l_2 \in \Bc_{V_2}(l_a, l_0, m_2)$.
\end{itemize}
\subsection*{Decoding and reconstruction}
Decoder 1:
\begin{itemize}
\item Decoder 1 first takes the action sequence $a^n(l_a)$ to obtain the side information $Y_1^n$. We note that if \\$(a^n(l_a), x^n, u^n(l_a, l_0), v_1^n(l_a, l_0, l_1)) \in \aep$, then $\P\{(a^n(l_a), x^n, u^n(l_a, l_0), v_1^n(l_a, l_0, l_1), Y_1^n) \in \aep\} \to 1$ as $n \to \infty$ by the conditional typicality lemma \cite[Chapter 2]{El-Gamal--Kim2010} and the fact that $Y_1^n \sim \prod_{i=1}^n p(y_{1i}|x_i, a_i)$.
\item Decoder 1 then decodes $U^n$. it does this by finding the unique $\lh_0$ such that $u^n(l_a, \lh_0) \in \Bc_U(l_a, m_0)$. If there is none or more than one such $\lh_0$, an error is declared. Following standard analysis for the Wyner-Ziv setup (see for e.g. \cite[Chapter 12]{El-Gamal--Kim2010}), the probability of error goes to zero as $n \to \infty$ since there are less than or equal to $2^{n(I(U;Y_1|A) - \e)}$ $U^n$ sequences within each bin. 
\item Similarly, decoder 1 decodes $V^n_1$. It does this by finding the unique $\lh_1$ such that $v_1^n(l_a, \lh_0, \lh_1) \in \Bc_{V_1}(l_a, \lh_0, m_1)$. If there is none or more than one such $\lh_1$, an error is declared. As with the previous step, the probability of error goes to zero as $n \to \infty$ since there are only $2^{n(I(V_1;Y_1|A, U) - \e)}$ $V_1^n$ sequences within each bin. 
\item Decoder 1 then reconstructs $x^n$ as $\xh_{1i}(a_i(l_a), u_i(l_a, \lh_0), v_{1i}(l_a, \lh_0, \lh_1), y_{1i})$ for $i \in [1:n]$.
\end{itemize}

Decoder 2: As the decoding steps for decoder 2 are similar to that for 1, we will only mention the differences here. That is, decoder 2 uses side information $Y_2^n$ instead of $Y_1^n$ to perform the decoding operations and instead of decoding $V_1^n$, decoder 2 decodes $V_2^n$.
\begin{itemize}
\item Decoder 2 decodes $V^n_2$. It does this by finding the unique $\lh_2$ such that $v_2^n(l_a, \lh_0, \lh_2) \in \Bc_{V_2}(l_a, \lh_0, m_2)$. If there is none or more than one such $\lh_2$, an error is declared. As with the previous step, the probability of error goes to zero as $n \to \infty$ since there are only $2^{n(I(V_2;Y_2|A, U) - \e)}$ $V_2^n$ sequences within each bin. 
\item Decoder 1 then reconstructs $x^n$ as $\xh_{2i}(a_i(l_a), u_i(l_a, \lh_0), v_{2i}(l_a, \lh_0, \lh_2), y_{2i})$ for $i \in [1:n]$.
\end{itemize}

\subsection*{Distortion and cost constraints}
\begin{itemize}
\item For the cost constraint, since the chosen $A^n$ sequence is typical with high probability, $\E\Lambda(A^n) \le C + \e$ by the typical average lemma \cite[Chapter 2]{El-Gamal--Kim2010}.
\item For the distortion constraints, since the probability of ``error'' goes to zero as $n \to \infty$ and we are dealing only with finite cardinality random variables, following the analysis in \cite[Chapter 3]{El-Gamal--Kim2010}, we have
\begin{align*}
\frac{1}{n}\E d_1(X^n, \Xh_1^n) &\le D_1 + \e, \\
\frac{1}{n}\E d_2(X^n, \Xh_2^n) &\le D_2 + \e. 
\end{align*}
\end{itemize}

\end{document}